\numberwithin{equation}{section}
\newtheorem{theorem}{Theorem}[section]
\newtheorem{lemma}[theorem]{Lemma}
\newtheorem{proposition}[theorem]{Proposition}
\newtheorem{remark}[theorem]{Remark}
\newtheorem{assumption}[theorem]{Assumption}
\newcommand{\ind}{\mathbf{1}}
\newcommand{\lip}{\mathrm{lip}}
\newcommand{\N}{\mathbb{N}}
\newcommand{\cA}{{\ensuremath{\mathcal A}} }
\newcommand{\cB}{{\ensuremath{\mathcal B}} }
\newcommand{\cP}{{\ensuremath{\mathcal P}} }
\newcommand{\cL}{{\ensuremath{\mathcal L}} }
\newcommand{\cT}{{\ensuremath{\mathcal T}} }
\newcommand{\bP}{{\ensuremath{\mathbf P}} }
\newcommand{\bE}{{\ensuremath{\mathbf E}} }
\DeclareMathSymbol{\leqslant}{\mathalpha}{AMSa}{"36} 
\DeclareMathSymbol{\geqslant}{\mathalpha}{AMSa}{"3E} 
\DeclareMathSymbol{\eset}{\mathalpha}{AMSb}{"3F}     
\newcommand{\dd}{\,\text{\rm d}}             
\newcommand{\bbE}{{\ensuremath{\mathbb E}} }
\newcommand{\bbN}{{\ensuremath{\mathbb N}} }
\newcommand{\bbP}{{\ensuremath{\mathbb P}} }
\newcommand{\bbR}{{\ensuremath{\mathbb R}} }
\newcommand{\bbZ}{{\ensuremath{\mathbb Z}} }
\newcommand{\bt}{{\ensuremath{\mathbf t}} }
\newcommand{\gb}{\beta}
\newcommand{\gep}{\varepsilon}       
\newcommand{\go}{\omega}
\newcommand{\gl}{\lambda}
\def\captionfont@{\footnotesize}
\def\captionheadfont@{\scshape}
\long\def\@makecaption#1#2{%
  \vspace{2mm}
  \setbox\@tempboxa\vbox{\color@setgroup
    \advance\hsize-6pc\noindent
    \captionfont@\captionheadfont@#1\@xp\@ifnotempty\@xp
        {\@cdr#2\@nil}{.\captionfont@\upshape\enspace#2}%
    \unskip\kern-6pc\par
    \global\setbox\@ne\lastbox\color@endgroup}%
  \ifhbox\@ne 
    \setbox\@ne\hbox{\unhbox\@ne\unskip\unskip\unpenalty\unkern}%
  \fi
  \ifdim\wd\@tempboxa=\z@ 
    \setbox\@ne\hbox to\columnwidth{\hss\kern-6pc\box\@ne\hss}%
  \else 
    \setbox\@ne\vbox{\unvbox\@tempboxa\parskip\z@skip
        \noindent\unhbox\@ne\advance\hsize-6pc\par}%
\fi
  \ifnum\@tempcnta<64 
    \addvspace\abovecaptionskip
    \moveright 3pc\box\@ne
  \else 
    \moveright 3pc\box\@ne
    \nobreak
    \vskip\belowcaptionskip
  \fi
\relax
}
\def\writefig#1 #2 #3 {\rlap{\kern #1 truecm
\raise #2 truecm \hbox{#3}}}
\newcommand{\tf}{\textsc{f}}
\title[Stretched exponential pinning]
{The rounding of the phase transition for disordered pinning with stretched exponential tails}
\address{IMPA - Instituto Nacional de Matem\'atica Pura e Aplicada,
Estrada Dona Castorina 110,
Rio de Janeiro / Brasil 22460-320}
\email{lacoin@ceremade.dauphine.fr}
\author{Hubert Lacoin}
\begin{document}

\maketitle

\begin{abstract}
The presence of frozen-in or quenched disorder in a system can often modify the nature of its phase transition.
A particular instance of this phenomenon is the so-called rounding effect: it has been shown in many cases
that the free-energy curve of the disordered system at its critical point is smoother than that of the homogenous one. In particular some disordered systems do not allow first-order transitions.
We study this phenomenon for the pinning of a renewal with stretched-exponential tails on a defect line (the distribution $K$ of the renewal increments satisfies $K(n) \sim c_K\exp(-n^{\zeta}),$  $\zeta\in (0,1)$) which has a first order transition when disorder is not present.
We show that the critical behavior of the disordered system depends on the value of $\zeta$: when $\zeta>1/2$ the transition remains of first order, whereas the free-energy diagram is smoothed for $\zeta\le 1/2$. Furthermore we show that the rounding effect is getting stronger when $\zeta$ diminishes. \\
{\it Keywords: Disordered pinning, Phase transition, Rounding effect, Harris Criterion.}
\end{abstract}

\section{Introduction}

The effect of a quenched disorder on critical phenomena is a central topic in equilibrium statistical mechanics. In many cases it is expected that the presence of impurities in a system \textit{rounds} or \textit{smoothes} the phase transition in the following sense: the order parameter can be continuous   at the phase transition  for the disordered system whereas it presents a discontinuity for the pure system  (see e.g. the pioneering work of Imri and Ma \cite{cf:IM}). An instance for which this phenomenon is rigorously proved is the magnetization transition  of the two dimensional random field Ising model at low temperature (see \cite{cf:AW}).

\medskip

This phenomenon has been particularly studied for the polymer pinning  on a defect line (introduced by Fisher in \cite{cf:Fisher}).
Whereas the model can be defined for a renewal with a distribution tail which is heavier than exponential (see \eqref{rules}), the case of power-law tail has focused most of the attention, due to its physical interpretation and its rich mathematical structure. The interested reader can refer to \cite{cf:GB, cf:GB2, cf:denH} for reviews on the subject. The smoothing of the free-energy curve for the pinning model with power-law tails was proved in \cite{cf:GT05} (with some restriction on the law of the disorder see \cite{cf:CdH} for a recent generalization of the result; see also \cite{cf:BL2, cf:LT} for related models). This confirmed predictions by theoretical physicists \cite{cf:DHV} based on an interpretation of the Harris criterion \cite{cf:Harris}. Some other consequences of the introduction of disorder such as critical point shift were studied in \cite{cf:A06, cf:T08, cf:DGLT07, cf:AZ08, cf:GLT08, cf:GLT09, cf:AZ10}. 

\medskip

The present paper aims to study how this phenomenology transposes for renewals with a much lighter tail, stretched exponential ones. Whereas this issue does not seem to be discussed much in the literature, it is clear from a mathematical point of view that the type of argument used in \cite{cf:GT05} do not extend to that case (see Section \ref{smoothpol} for a more detailed discussion). This hints to the fact that when renewal tails gets lighter, Harris predictions on disorder relevance might not apply (or at least not in a straightforward manner). We show that this is indeed the case and provide a necessary and sufficient condition on the return exponent for smoothing of the free-energy curve to hold.

\medskip

Let us notice finally notice that renewals with stretched exponential tails have recently been the object of a study by Torri  \cite{cf:T14} with a different perspective: he focuses on the issue of the scaling limit of the process when the environment is heavy tailed.

\subsection{The disordered pinning model}

Let us shortly introduce the model: set $\tau:=(\tau_0,\tau_1,\ldots)$ to be a renewal process of law
$\bP$, with inter-arrival law $K(\cdot)$, {\sl i.e.}, $\tau_0=0$ and
$\{\tau_i-\tau_{i-1}\}_{i\in\N}$ is a sequence of IID positive integer-valued
random variables. 
Set 
\begin{equation}
K(n):=\bbP[\tau_1=n].
\end{equation}
We assume that 
\begin{equation}\label{rules}
\lim_{n\to \infty} n^{-1} \log K(n)=0.
\end{equation}
Note that with a slight abuse of notation, $\tau$ can also be considered as a subset of $\bbN$ and we will write 
$\{ n\in \tau \}$ for $\{ \exists i, \ \tau_i=n \}$.
 The random potential $\go:=\{\go_1,\go_2,\ldots\}$ is a sequence of
IID centered random variables  which have unit variance and  exponential moments of all order

\begin{equation}\label{defgl}
\gl(\gb):=\log \bbE[e^{\gb\go}]<\infty.
\end{equation}

\medskip

Given $\gb>0$ (the inverse temperature) and $h\in \bbR$,
we define $\bP^{\gb,h,\go}_N$  a measure whose Radon-Nikodym derivative w.r.t $\bP$ is given by 
\begin{equation}\label{density}
\frac{\dd \bP^{\gb,h,\go}_N}{\dd \bP}(\tau):= \frac{1}{Z^{\gb,h,\go}_N}\exp\left(\sum_{n=0}^N(\gb \go_n+h)\delta_n\right)\delta_N
\end{equation}
where $\delta_n=\ind_{\{n\in \tau\}}$ and $Z^{\gb,h,\go}_N$ is the renormalizing constant which makes $\bP^{\gb,h,\go}_N$ a probability law:
\begin{equation}
  \label{eq:Znh}
  Z^{\gb,h,\go}_{N}:=\bE\left[e^{\sum_{n=1}^N(\beta\go_n+h)\delta_n}\delta_N\right].
\end{equation}

\begin{remark} \label{rembc} In the definition \eqref{density} of $\bP^{\gb,h,\go}_N$, the $\delta_N$ corresponds to constraining the end point to be pinned.
This conditioning is present for technical reasons and makes some computations easier but is not essential.
\end{remark}

By ergodic  super-additivity, (see \cite[Chap. 4]{cf:GB}), the limit
\begin{equation}
  \label{eq:F_nh}
  \tf(\beta,h):=\lim_{N\to\infty}\frac1N\log Z^{\gb,h,\go}_{N}
\end{equation}
exists and is non-random. It is non-negative because of assumption \eqref{rules} and convex in $h$ as a limit of convex functions.
The expectation also converges to the same limit
\begin{equation}\label{freen}
  \tf(\beta,h)=\lim_{N\to\infty}\frac1N\bbE\log Z^{\gb,h,\go}_{N}.
\end{equation}
The function $\tf$ is called the free-energy  (or sometimes pressure) of the system. Its derivative in $h$ gives the asymptotic contact fraction of the renewal process, i.e.\ the mean number of contact per unit length,
\begin{equation}\label{gro}
  \partial_h \tf(\beta,h)=\lim_{N\to \infty} \frac 1 N \bE^{\gb,h,\go}\left[\sum_{n=1}^N \delta_n\right].
\end{equation}
The above convergence holds by convexity as soon as $\partial_h \tf(\beta,h)$ is defined (i.e.\ everywhere except eventually at a countable number of points).
If \eqref{rules} holds, the system undergoes a phase transition from a de-pinned state
($\tf(\gb,h)\equiv 0$) to a pinned one ($\tf(\gb,h)>0$ and $ \partial_h \tf(\beta,h)>0$) when $h$ varies.

\medskip

We define $h_c(\gb)$, the critical point at which this transition occurs 
\begin{equation}
h_c(\gb):= \min\left\{ h \ | \ \tf(\gb,h)>0\right\}. 
\end{equation}
As the underlying renewal process $\tau$ is recurrent, we have $h_c(0)=0$.
From \cite[Theorem 2.1]{cf:Alea}, the free energy is infinitely differentiable in $h$ on $(h_c(\gb),\infty)$ (so that \eqref{gro} holds everywhere except maybe at the critical point).
The phase transition  for the pure system, that is, for $\gb=0$, is very well understood. The pure model is said to be exactly solvable 
and there is a closed expression for $\tf(0,h)$ in terms of the renewal function $K$ (see \cite{cf:Fisher}).

\subsection{Disorder relevance and Harris criterion for power-law renewals}

The disordered system ($\gb>0$) is much more complicated to analyze and has given rise to a rich literature, most of which devoted to the case where when $n\to \infty$
\begin{equation}\label{poly}
K(n)=c_K n^{-(1+\alpha)}(1+o(1))
\end{equation}
for some $\alpha>0$. For the pure model, the free-energy vanishes like a power of $h$ at $0+$ (see \cite[Theorem 2.1]{cf:GB}).
\begin{equation}
  \tf(0,h)= c'_K h^{\max(1,\alpha^{-1})}(1+o(1)),
 \end{equation}
for $\alpha\ne 1$ (a logarithmic correction is present in the case $\alpha=1$).
The main question for the study of disordered pinning model is how this property of the phase transition is affected by the introduction of disorder.
For $\gb>0$, does there exist $\nu$ such that at the vicinity of  $h_c(\gb)_+$ 
\begin{equation}\label{nu}
  \tf(\gb,h)\approx (h-h_c(\gb))^{\nu}?
 \end{equation}
If this holds, is $\nu$ equal $\max(1,\alpha^{-1})$, like for the pure system?
A first partial answer to that question was given by Giacomin and Toninelli \cite{cf:GT05} (or in \cite{cf:CdH} with more generality)  where it was shown that 
\begin{equation}
  \tf(\gb,h)\le C\left( \frac{h-h_c(\gb)}{\gb}\right)^{2},
 \end{equation}
meaning that the quenched critical exponent for the free-energy $\nu$, if its exists, satisfies $\nu \ge 2$.
In particular it cannot be equal to the one of the pure system when $\alpha>1/2$.

\medskip

One the other hand, for small $\gb$ and $\alpha<1/2$, it was shown by Alexander \cite{cf:A06} (see \cite{cf:T08, cf:Lmart} for alternative proofs) that $h_c(\gb)=-\gl(\gb)$ (recall \eqref{defgl})  and that when $u\to 0+$
\begin{equation}
  \tf(\gb,u-\gl(\gb))=  \tf(0,u)(1+o(1))
\end{equation}
meaning that $\nu$ exists and is equal to $\max(1,\alpha^{-1})$ as for the pure model.

\medskip

Another aspect of the relevance of disorder is the shift of the \textsl{quenched} critical point with respect to the \text{annealed} one.
The annealed critical point is the one corresponding to the phase transition of the annealed partition function obtained by averaging over the environment
\begin{equation}\label{annealedbound}
h_c^a(\gb):=\inf \left\{ h \ | \ \lim_{N\to \infty} \frac{1}{N} \log  \bbE\left[ Z^{\gb,h,\go}_{N}\right]>0\right\}=-\gl(\gb).
\end{equation}
It follows from Jensen's inequality that 
\begin{equation}\label{quenchann}
h_c(\gb)\ge h^a_c(\gb).
\end{equation}
The question of whether the above inequality is strict
was investigated in 
 \cite{cf:DGLT07, cf:AZ08, cf:GLT08, cf:GLT09} yielding the conclusion that $h_c(\gb)>-\gl(\gb)$ for every $\gb>0$ and $\alpha\ge 1/2$.

\medskip

These results were predicted in the Physics literature \cite{cf:DHV, cf:FLN}, based on an interpretation of the Harris criterion \cite{cf:Harris}: when the specific-heat exponent of the pure system (for the pinning model, this exponent is equal to $2-\max(1,\alpha^{-1})$) is positive, then disorder affects the critical properties of the system and is said to be relevant, whereas when  the specific-heat exponent is negative disorder is irrelevant for small values of $\gb$.

\medskip

Relevant disorder affects both the location of the critical point which is shifted with respect to the annealed bound \eqref{quenchann}
 \cite{cf:DGLT07, cf:AZ08,cf:GLT08, cf:GLT09}, and the critical exponent of the free-energy \cite{cf:GT05, cf:CdH}.
Note that the value of $\nu$ (and even its existence)  when disorder is relevant is an open question even among physicists; let us mention the recent work 
\cite{cf:DR} where heuristics in favor $\nu=\infty$ (infinitely derivable free-energy at the critical point) are  given for a toy-model.

%
%


In this paper, we choose to look at renewal processes whose tails are stretched exponentials, i.e we assume that there exists $\zeta\in(0,1)$ such that
\begin{equation} 
K(n)\approx \exp(-n^{\zeta}),
\end{equation}
in some sense. 
As the increments of $\tau$ have finite mean, the transition of the pure model is of first order, meaning that $\tf(0,h)$ is not derivable at $h_c(0)=0$ positive recurrent. More precisely,
from \cite[Th. 2.1]{cf:GB} one has
\begin{equation}
  \label{eq:annF}
  \tf(0,h)\stackrel{h\searrow0}\sim \frac{h}{\bE[\tau_0]}.
  \end{equation}
 as for the case $\alpha>1$ in \eqref{poly}. Hence a standard interpretation of the Harris criterion would tell us that disorder should be relevant for every $\gb$.
This is partially true in the sense that this conclusion is right if one considers only the question of the critical point shift.
The method developed  in \cite{cf:DGLT07} can be adapted almost in a straightforward manner to show that

\begin{proposition}\label{mook}
When $K(n)$ has stretched-exponential tails, then for all $\gb>0$,
\begin{equation}
h_c(\beta)>-\gl(\gb).
\end{equation}
\end{proposition}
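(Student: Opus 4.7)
The plan is to adapt the fractional moment / coarse graining / change of measure method of \cite{cf:DGLT07} to the positive recurrent setting. Since $K$ has stretched exponential tails, $\bE[\tau_1]<\infty$, so the annealed system has a first order transition at $h=-\gl(\gb)$. Fix $\gb>0$ and pick $u>0$ small. It suffices to prove $\tf(\gb,-\gl(\gb)+u)=0$, which is equivalent to showing that the reduced partition function
$$\bar Z_N^u := \bE\!\left[\exp\!\left(\sum_{n=1}^N \bigl(\gb\go_n-\gl(\gb)+u\bigr)\delta_n\right)\delta_N\right]$$
satisfies $\bbE\!\left[(\bar Z_N^u)^\theta\right]\le C$ uniformly in $N$ for some $\theta\in(0,1)$, since this forces the quenched free energy to vanish. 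If this can be done for every $u>0$ below some threshold $u(\gb)>0$, the conclusion $h_c(\gb)\ge-\gl(\gb)+u(\gb)>-\gl(\gb)$ follows.

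To obtain the fractional moment bound, I would cut $\{1,\dots,N\}$ into blocks $B_j$ of length $\ell$ (to be tuned) and split $\bar Z_N^u=\sum_I \bar Z_{N,I}^u$, where $I$ ranges over subsets of blocks and $\bar Z_{N,I}^u$ collects renewal trajectories whose set of visited blocks is exactly $I$. Subadditivity of $x\mapsto x^\theta$ gives $\bbE[(\bar Z_N^u)^\theta]\le \sum_I \bbE[(\bar Z_{N,I}^u)^\theta]$. For each $I$, I would perform a change of measure $\bbP\to\tilde\bbP_I$ that exponentially tilts every $\go_n$ with $n\in\bigcup_{i\in I}B_i$ so that its mean is shifted to $-\gd$, with $\gd$ slightly larger than $u/\gb$. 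Hölder's inequality yields
$$\bbE[(\bar Z_{N,I}^u)^\theta]\le \left(\bbE\!\left[(\mathrm{d}\bbP/\mathrm{d}\tilde\bbP_I)^{\theta/(1-\theta)}\right]\right)^{1-\theta}\tilde\bbE_I[\bar Z_{N,I}^u]^\theta\le e^{c\gd^2\ell|I|}\,\tilde\bbE_I[\bar Z_{N,I}^u]^\theta,$$
the Hölder cost being quadratic in $\gd$ via a Taylor expansion of $\gl$ near zero.

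Under $\tilde\bbE_I$ the effective chemical potential on visited blocks is $u-\gb\gd<0$, so after annealing, each visited block contributes a single-block partition function for a subcritical pure pinning system. By \eqref{eq:annF} and positive recurrence, this single-block mass decays like $e^{-c_1(u)\ell}$ once $\ell$ is large, producing a net per-block factor of order $e^{-c_2(u)\ell}$ after competition with the Hölder cost. The inter-block gaps contribute factors $K(\cdot)^\theta$ which are summable with huge margin by the stretched exponential tail \eqref{rules}, so the remaining sum over $I$ (patterns of visited blocks and entry/exit points) converges uniformly in $N$. The main obstacle is the quantitative balance of three ingredients: the Hölder cost $c\gd^2\ell|I|$, the single-block subcritical decay $c_1(u)\ell|I|$, and the combinatorial cost from $I$; choosing $\gd=2u/\gb$ and then $\ell=\ell(u,\gb)$ large enough makes $c_1(u)\ell$ dominate both the other terms, which produces the uniform fractional moment bound and hence the strict inequality.
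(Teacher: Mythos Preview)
Your overall framework (fractional moment, block decomposition, exponential tilt on visited blocks) is exactly what the paper has in mind: the paper does not prove Proposition~\ref{mook} in detail but simply states that the method of \cite{cf:DGLT07} ``can be adapted almost in a straightforward manner''. So at the level of strategy you are aligned with the paper.

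There is, however, a genuine quantitative gap in your balancing step. You claim that the tilted annealed single-block mass ``decays like $e^{-c_1(u)\ell}$'' and justify this by \eqref{eq:annF} and positive recurrence. This is false for stretched-exponential $K$. For $h<0$ one has $\tf(0,h)=0$, not a strictly negative value; equivalently, the generating function $\sum_n K(n)x^n$ has radius of convergence exactly $1$ when $K(n)\asymp e^{-n^\zeta}$, so the defective renewal sequence $Z^{(-u)}_m=\sum_{k\ge 1}e^{-uk}K^{*k}(m)$ cannot decay exponentially in $m$. By local subexponentiality of $K$ one has instead $Z^{(-u)}_m\sim C(u)K(m)\asymp e^{-m^\zeta}$. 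Consequently, with your scaling $\gd=2u/\gb$ fixed and $\ell\to\infty$, the H\"older cost $e^{c\gd^2\ell}$ grows \emph{exponentially} in $\ell$ while the subcritical gain is only stretched-exponential, and your inequality ``$c_1(u)\ell$ dominates'' never materializes.

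The correct implementation follows \cite{cf:DGLT07} more closely: take $\gd=\gd_0\,\ell^{-1/2}$ so that the H\"older cost per block is $e^{c\gd_0^2}=O(1)$, and then show that the coarse-grained one-step weight (the sum over entry/exit points in a block of the tilted annealed contribution, combined with the incoming $K^\theta$ factor) tends to $0$ as $\ell\to\infty$. Here the stretched-exponential tail actually helps, since $\sum_n K(n)^\theta<\infty$ with room to spare. One then fixes $\ell$ large enough to make this weight small and finally chooses $u$ small (of order $\gd_0\gb\,\ell^{-1/2}$) so that the tilt renders the block effectively subcritical. The rest of your outline (summability over $I$, uniform fractional-moment bound, conclusion $h_c(\gb)\ge -\gl(\gb)+u(\gb)$) then goes through.
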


The more challenging question is the one about the order of the phase transition. Indeed the smoothing inequality proved in \cite{cf:GT05} strongly relies of the fact that
$K(\cdot)$ has a power-law tail. 

\medskip

We are in fact able to find a necessary and sufficient condition on $\zeta$ for a smoothing inequality to hold: we prove that when $\zeta>1/2$, the transition remains of first order for the disordered system, while for $\zeta\le 1/2$ the transition is rounded.
We also give upper and lower bounds, which do not coincide, on the exponent $\nu$, informally defined in \eqref{nu}, when rounding occurs, in particular we show that for any value of $\zeta\in(0,1)$, the disordered phase transition remains of finite order.

\section{Presentation of the results} \label{sec:nHmodel}

\subsection{Results}

We assume here and in what follows that there exist a constant $c_K$ and $\zeta\in (0,1)$ which is such that
\begin{equation}
  \label{eq:K}
K(n)= c_K(1+o(1))\exp(-n^{\zeta}).
\end{equation}
The law $K(n)$ as well as the law of $\go$ are considered to be fixed, and constants that are mentioned throughout the proof can depend on both.
Unless it is specified, they will not depend on $\gb$ and $h$.


For our first result, we need to assume that the law of our product environment satisfies a concentration inequality.
We say that $F: \bbR^N\to \bbR$ is Lipschitz if for some $k>0$ if 

\begin{equation}
\| F \|_{\lip}=\sup_{x\ne y \in \bbR^N} \frac{|F(x)-F(y)|}{|x-y|}<\infty
\end{equation}
where $|x-y|= \sqrt{ \sum (x_i-y_i)^2}$  is the Euclidean norm.

\begin{assumption}\label{logsob}
There exist constants $C_1$ and $C_2$ such that for any $N$ and for any Lipschitz convex function $F$ on $\bbR^N$, one has 
\begin{equation}
\bbP \left(
| F(\go_1,\dots,\go_N)-\bbE \left[ F(\go_1,\dots,\go_N)\right]| \ge u  \right) \le C_1e^{-\frac{u^2}{C_2 \| F \|_{\lip}^2}}
\end{equation}
\end{assumption}

A crucial point here is that inequality does not depend of the dimension $N$. This is the reason why we use concentration for the Euclidean norm rather than for the $L_1$ norm.

\begin{remark}
The concentration assumption is not very restrictive, it holds for bounded  $\go$ (see \cite[Chapter 4]{cf:Ledoux}),
or when $\go$ satisfies a $\log$-Sobolev inequality (see \cite[Chapter 5]{cf:Ledoux} in this case there is no convexity required). This second case includes in particular the case of Gaussian variables and many other classic laws.
\end{remark}

Our first result states that the transition is of first order for the system for $\zeta>1/2$ (no smoothing holds). Here and in what follows $x_+:= \max(x,0)$ denotes the positive part of $x\in \bbR$.

\begin{theorem}\label{mainres}
Assume that Assumption \ref{logsob} holds.
\begin{itemize}
\item[(i)] For $\zeta>1/2$, there exists a constant $c$ such that for all $\gb$ and $h$,

\begin{equation}\label{ground}
\tf(\gb,h)\ge c\max(1,\gb^{-2})  (h-h_c(\gb))_+.
\end{equation}

\item[(ii)]
For $\zeta\le 1/2,$ there exists a constant $c$ and $u_0(\gb)$ such that for all $u\in(0,u_0)$
\begin{equation}\label{ground2}
\tf(\gb,h_c(\gb)+u )\ge c\left(\frac{u}{\gb^2 | \log u|}\right)^{\frac{1-\zeta}{\zeta}}.
\end{equation}
\end{itemize}
 \end{theorem}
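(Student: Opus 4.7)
My plan is to prove both parts via a coarse-graining argument: lower-bound the quenched free energy by restricting the renewal to trajectories that target high-energy stretches of the environment. I fix scales $T$ (super-block length) and $\ell\leq T$ (favorable sub-stretch length) and partition $\{1,\ldots,N\}$ into super-blocks of length $T$. In each super-block I identify an $\ell$-sub-stretch on which the environmental sum is large. Applying Assumption~\ref{logsob} to the convex Lipschitz functional
\begin{equation*}
F_T(\go) := \max_{a\in\{0,\ldots,T-\ell\}}\sum_{i=a+1}^{a+\ell}\go_i,
\end{equation*}
whose Lipschitz constant is $\sqrt{\ell}$, I get $\bbE[F_T]\asymp\sqrt{\ell\log(T/\ell)}$ with fluctuations of order $\sqrt{\ell}$. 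So with probability bounded away from zero, each super-block contains an $\ell$-sub-stretch with partial sum at least $c\sqrt{\ell\log(T/\ell)}$. Restricting the renewal to trajectories that jump into these favorable sub-stretches (paying $K(T)\geq c'e^{-T^\zeta}$) and then contact every one of their $\ell$ points yields
\begin{equation*}
\tf(\gb,h) \geq \frac{1}{T}\Big[c\gb\sqrt{\ell\log(T/\ell)} + \ell\big(h+\log K(1)\big) - T^\zeta + O(1)\Big],
\end{equation*}
whose contact density as a strategy is $\ell/T$.

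The two parts correspond to different regimes in the optimization. For part~(i), $\zeta>1/2$: I calibrate $\ell$ so that the bonus $\gb\sqrt{\ell\log(T/\ell)}$ absorbs the cost $\ell|\log K(1)|$ (roughly $\ell\asymp\gb^2\log(T/\ell)$), then choose $T$ so that the entropic cost $T^\zeta$ is beaten by the bonus; the condition $\zeta>1/2$ is precisely what guarantees that such a choice exists. Tracking the $h$-dependence carefully gives a strategy with slope in $h$ at least $c\max(1,\gb^{-2})$, which via convexity of $\tf$ produces \eqref{ground}. For part~(ii), $\zeta\leq 1/2$: the same balance cannot absorb $T^\zeta$ at a fixed-order $T$, so I re-optimize $T$ (and $\ell$) as functions of the excess $u = h-h_c(\gb)$. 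Balancing $u\ell/T$ against $T^{\zeta-1}$ gives $T\asymp(\gb^2|\log u|/u)^{1/\zeta}$ after resolving iterated logs, and plugging back produces \eqref{ground2}.

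The main obstacle I expect is that the strategy as stated yields only $\tf(\gb,h)\geq\rho(h-h_\star)_+$ with an implied critical point $h_\star\geq h_c(\gb)$, which gives $\rho(u-(h_\star-h_c))_+$ in place of the desired $\rho u$. To bridge this gap I would (a) vary the strategy and take a supremum so that $h_\star$ is pushed down to $h_c(\gb)$, or (b) argue directly about the right derivative $\partial_h^+\tf$ at $h_c(\gb)$ via convexity of the free energy (combined with a soft estimate showing $\tf$ is strictly positive just above $h_c(\gb)$). A secondary technical point is the joint handling of favorable-event indicators across super-blocks; this can be arranged by deterministically fixing the super-blocks and applying concentration block-wise. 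The crossover at $\zeta=1/2$ is ultimately the signature of Gaussian-type environmental fluctuations: the bonus scales like $\sqrt{\ell\log T}$ while the cost scales like $T^\zeta$, and these can be simultaneously balanced with $\ell$ comparable to $T$ only when $\zeta>1/2$.
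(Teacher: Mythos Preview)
Your proposal has a structural gap that you flag as an ``obstacle'' but underestimate: it is in fact the whole difficulty. Any explicit strategy lower bound yields $\tf(\gb,h)\ge \Phi_{T,\ell}(h)$ for some affine function whose zero $h_\star(T,\ell)$ satisfies only $h_\star(T,\ell)\ge h_c(\gb)$, since the strategy is sub-optimal. Neither proposed fix works: for (a), taking a supremum over your restricted family still leaves $\inf_{T,\ell}h_\star(T,\ell)\ge h_c(\gb)$ with no mechanism forcing equality---the true critical point is an implicit quantity determined by the full disordered Gibbs measure, not by any restricted class of block-targeting trajectories; for (b), convexity of $\tf$ only says $\partial_h\tf$ is nondecreasing, and strict positivity of $\tf$ just above $h_c(\gb)$ is the definition of $h_c(\gb)$, not additional input. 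Thus your construction can at best upper-bound $h_c(\gb)$; it cannot lower-bound $\tf(\gb,h_c(\gb)+u)$ as a function of $u$.

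The paper proceeds in the opposite direction, via an \emph{upper} bound that yields a differential inequality. Concentration (Assumption~\ref{logsob}) is applied not to the maximum of partial sums but to $\log Z_N^\go(A)$ for events $A\subset\cA^\gep$ (trajectories with contact fraction at most $\gep$), giving
\[
\frac{1}{N}\log Z_N^\go(\cA^\gep)\;\le\;\tfrac{1}{2}\tf(\gb,h)+\max_{l\ge\gep^{-1}}\Big(C\gb\sqrt{\tfrac{\gep\log l}{l}}-\tfrac{1}{4}\,l^{\zeta-1}\Big),
\]
while for $\gep>\partial_h\tf(\gb,h)$ the left side recovers $\tf(\gb,h)$ itself. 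Taking $\gep=2\partial_h\tf$ produces a relation $\tf\le G(\gb,\partial_h\tf)$. For $\zeta>1/2$ one checks $G(\gb,\gep)\le 0$ once $\gep$ is below a threshold depending only on $\gb$, forcing $\partial_h\tf$ to stay bounded below throughout $(h_c(\gb),\infty)$; for $\zeta\le 1/2$ one inverts to get $\partial_h\tf\ge g(\tf)$ and integrates from $h_c(\gb)$. A relation between $\tf$ and $\partial_h\tf$ is automatically anchored at the true critical point where $\tf$ vanishes, and this is precisely the idea missing from your approach.
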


\medskip

Our second result shows that smoothing holds for $\zeta< 1/2$.
For this result we need to assume that the environment is Gaussian. 
The assumption could be partially relaxed but the exposition of the Gaussian case is much easier.
Let us mention that the recent work \cite{cf:CdH} gives hopes to extend the proof to general $\go$.

\begin{theorem}\label{mainres2}
Let us assume that the environment is Gaussian. Then for all $\zeta<1/2$ there exists a constant $c$ (which depends on $K$) such that in a neighborhood of $h_c(\gb)$
\begin{equation}
\tf(\gb,h)\le c \left(\frac{h-h_c(\gb)}{\gb}\right)^{2(1-\zeta)}_+.
\end{equation}
\end{theorem}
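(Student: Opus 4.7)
The plan is to adapt the rare-stretch smoothing strategy of Giacomin--Toninelli~\cite{cf:GT05} to stretched-exponential renewals. Let $u := h-h_c(\gb)>0$. I would choose a tilt $\gd := u/\gb$ and a block length $L$ of order $\gd^{-2}=(\gb/u)^{2}$. Two observations drive the argument. First, by Gaussian concentration on a block $I$ of length $L$, the ``bad block'' event
\[
B:=\Bigl\{\sum_{n\in I}\go_n \le -\gd L\Bigr\}
\]
has probability bounded below by a positive universal constant, since $\gd^{2}L=O(1)$. Second, the Cameron--Martin shift $\go\mapsto\go+\gd\ind_I$ turns the atypical environment on $I$ into a Gaussian conditioned on having non-positive sum (a $1/2$-probability event), and simultaneously moves the effective pinning parameter on $I$ from $h$ to $h-\gb\gd=h_c(\gb)$, i.e.\ to the critical value.

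\medskip

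Concretely, I would partition $[1,N]$ into $M=N/L$ consecutive blocks $I_j$ and consider the i.i.d.\ bad-block events $B_j$. On each bad block, Cameron--Martin identifies the local partition function as $Z_{I_j}(\go)=Z_{I_j}^{h_c(\gb)}(\tilde\go)$ where $\tilde\go$ is (conditionally) Gaussian with Radon--Nikodym derivative at most $2$ with respect to the free Gaussian law; combined with $\tf(\gb,h_c(\gb))=0$ this yields a sub-linear estimate on $\bbE[\log Z_{I_j}(\go)\mid B_j]$. On the renewal side, \eqref{eq:K} implies that any trajectory containing no point of $I_j$ must perform a jump of size at least $L$, contributing a factor bounded above by $\exp(-cL^{\zeta})$. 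Coarse-graining $Z_N$ across the blocks, each bad block contributes at most $O(L^{\zeta})$ to $\log Z_N$, either through its near-critical local partition function (visit) or through the jump cost (skip). Summing over the positive density of bad blocks should yield
\[
\bbE[\log Z_N]\ \le\ CNL^{\zeta-1}+o(N),
\]
so that $\tf(\gb,h)\le C\,L^{\zeta-1}\asymp (u/\gb)^{2(1-\zeta)}$, which is the claimed bound.

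\medskip

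The main technical obstacle is the rigorous implementation of the coarse-graining: converting the per-block heuristic into a genuine quenched upper bound on $\bbE[\log Z_N]$, while controlling boundary effects between blocks, renewal trajectories that only partially visit a bad block, and the correlations induced by the sum-over-$\tau$ structure of $Z_N$. I would expect this to proceed either via a fractional-moment bound on $(Z_N)^{\theta}$ for a small $\theta>0$ (exploiting the independence of the $B_j$), or via a careful super-additivity argument combined with the local subcriticality estimate. A secondary delicacy lies in the local estimate on $B_j$: one must bound the conditional expectation of $\log Z_{I_j}^{h_c(\gb)}$ by a quantity of order $L^{\zeta}$, and this is precisely the step where the stretched-exponential behaviour of $K$ at the critical scale enters. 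The Gaussian assumption is used essentially through the Cameron--Martin identity; relaxing it to general disorder, in the spirit of~\cite{cf:CdH}, would require a substantially more elaborate change of measure.
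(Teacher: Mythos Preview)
Your scales are correct ($\gd=u/\gb$, block size $L\asymp\gd^{-2}$) and the Cameron--Martin identification $Z_L^{h}(\go)=Z_L^{h_c}(\go+\gd)$ is the right device, but the argument as written has a genuine gap, and what you call a ``secondary delicacy'' is in fact the heart of the proof.

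\textbf{The many-block coarse-graining does not close.} Your displayed inequality $\bbE[\log Z_N]\le CNL^{\zeta-1}+o(N)$ does not follow from the per-block heuristic: you bound only the contribution of the \emph{bad} blocks, while the \emph{good} blocks --- also of positive density --- carry a local partition function whose expected logarithm is of order $L\,\tf(\gb,h)$, precisely the quantity to be bounded. There is no cancellation here, and neither the ``skip vs.\ visit'' dichotomy nor a fractional-moment scheme produces an upper bound on $Z_N$ as a product over blocks. The paper avoids this entirely by working with a \emph{single} block of size $N=(\gb/v)^{2}$ and invoking the finite-volume criterion (Lemma~\ref{finitevol}), namely $\tf(\gb,h)\le N^{-1}\bbE[\log Z_N^{\gb,h,\go}]+O(N^{\zeta-1})$. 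Your proposed ``super-additivity'' alternative is exactly this lemma; once it is used, the many-block picture disappears and the task reduces to bounding $\bbE[\log Z_N^{h}]$ for one well-chosen $N$.

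\textbf{The local estimate is needed on the favourable side, not the bad one.} By Lemma~\ref{gaussianlemma}, shifting $h\to h_c$ amounts to shifting the conditioning variable by $+1$, so $\bbE[\log Z_N^{h}]=\int (2\pi)^{-1/2}e^{-(u-1)^{2}/2}\,\bbE[\log Z_N^{h_c}\mid N^{-1/2}\sum_n\go_n=u]\,\dd u$, which puts mass on \emph{arbitrarily favourable} environments. The needed input is therefore not a bound conditional on $\sum\go\le 0$ (that one is $\le 0$ by monotonicity and $\bbE[\log Z_N^{h_c}]\le 0$, and costs nothing), but rather
\[
\bbE\Bigl[\log Z_N^{h_c}\ \Big|\ \sum_{n=1}^N\go_n\ge u\sqrt N\Bigr]\ \le\ CN^{\zeta}(1+|u|^{\zeta})\,e^{\zeta u^{2}/2},
\]
so that the integral against $e^{-(u-1)^{2}/2}$ converges. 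This is Proposition~\ref{bound}, and its proof is itself a rare-stretch argument \emph{at} $h_c$: one embeds the size-$N$ block in a system of length $NM$ with $M\asymp u\,e^{u^{2}/2}$, waits for the first copy with empirical mean $\ge uN^{-1/2}$, and uses $\tf(\gb,h_c)=0$ together with the jump cost $(NM)^{\zeta}\asymp N^{\zeta}u^{\zeta}e^{\zeta u^{2}/2}$ to force the displayed bound. Your line ``combined with $\tf(\gb,h_c)=0$ this yields a sub-linear estimate'' only delivers $\bbE[\log Z_N^{h_c}]\le 0$; the quantitative control of the conditional expectation for large positive $u$, which is where the stretched-exponential tail actually enters, is missing from your outline.
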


Finally with an extra assumption on $K(\cdot)$ we are able to state that the transition is smooth also when $\zeta=1/2$.
We say that $K(n)$ is $\log$ convex if $\log K$ can be extended to a convex function on $\bbR_+$;  or equivalently if  one has 
\begin{equation}\label{logconvex}
\forall n,l \in \bbN, \ n>l>1 \Rightarrow K(n+1)K(l-1)\ge K(n)K(l).
\end{equation}
This assumption is necessary to prove positive correlation, or the FKG inequality (see \cite{cf:FKG}) for the disordered renewal.

\begin{theorem}\label{mainres3}
Assume that $\log K(n)$ is a convex function of $n$. Then for $\zeta=1/2$ one has 
\begin{equation}
\tf(\gb,h)= o\left((h-h_c(\gb))_+\right).
\end{equation}
\end{theorem}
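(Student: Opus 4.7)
The plan is to adapt the smoothing-inequality argument of Theorem~\ref{mainres2} to the borderline case $\zeta=1/2$, substituting the Gaussian concentration used there by a comparison based on the FKG inequality, which becomes available under the log-convexity hypothesis on $K$. As a preliminary reduction, convexity of $\tf(\gb,\cdot)$ together with $\tf(\gb,h)=0$ for $h\le h_c(\gb)$ makes the conclusion $\tf(\gb,h)=o\bigl((h-h_c(\gb))_+\bigr)$ equivalent to the vanishing of the right derivative $\partial_h^+\tf(\gb,h_c(\gb))=0$, i.e.\ to the continuity of the phase transition; so it suffices to show that $\tf(\gb,h_c(\gb)+u)/u\to 0$ as $u\downarrow 0$.

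The core of the argument would mirror that of Theorem~\ref{mainres2}. For $u>0$ small, partition $[1,N]$ into blocks of length $\ell=\ell(u)$, declare a block ``bad'' when the empirical mean $\bar\go_\ell$ of the disorder on it drops below $-2u/\gb$ (so that the local effective field $h_c(\gb)+u+\gb\bar\go_\ell$ falls strictly below $h_c(\gb)$), and bound the free-energy contribution by comparing with the strategy that forces $\tau$ to leap over every bad block. Under assumption \eqref{defgl} alone, a Cram\'er large-deviation bound gives that the probability $p_\ell$ that a block is bad satisfies $p_\ell\le \exp(-cu^2\ell/\gb^2)$, while the cost of a single leap of length $\ell$ is $-\log K(\ell)\asymp \ell^{1/2}$ at $\zeta=1/2$. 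Balancing these two ingredients naively yields only $\tf(\gb,h_c(\gb)+u)=O(u)$, matching the lower bound of Theorem~\ref{mainres}(ii); no improvement can be extracted without a further input.

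The extra input is the log-convexity of $K$: it makes the reference renewal law $\bP$, and hence each Gibbs measure $\bP^{\gb,h,\go}_N$, FKG on the space of subsets of $\N$, so that the contact events $\{n\in\tau\}$ are positively correlated. I would use FKG to sharpen the coarse-graining step: the correlation inequality should let one estimate the combined cost of several atypical blocks jointly rather than by an independent product, producing an improved bound of the form $\tf(\gb,h_c(\gb)+u)\le u\,\eta(u)$, where a well-tuned choice of block size $\ell=\ell(u)\to \infty$ slowly as $u\downarrow 0$ makes $\eta(u)\to 0$. The limit $u\downarrow 0$ taken with this choice of $\ell(u)$ then gives the desired $o(u)$ conclusion.

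The main obstacle I foresee is converting the qualitative positive-correlation statement coming from FKG into a quantitative gain strong enough to beat the linear rate: one needs not merely a smaller constant, but a genuine factor $\eta(u)\to 0$. A secondary technical issue is that the disorder here is only assumed to have exponential moments rather than being Gaussian, so the concentration of $\bar\go_\ell$ is only subexponential; verifying that the Cram\'er rate function $I(-2u/\gb)\sim Cu^2/\gb^2$ remains valid in the scaling regime of interest, and that the subsequent FKG-based comparison is robust under this weaker concentration, will require some care.
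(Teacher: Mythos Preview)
Your proposal has a genuine gap that you yourself identify: you do not explain how FKG could be turned into a quantitative factor $\eta(u)\to 0$ in a coarse-graining bound, and there is no obvious mechanism for this. Positive correlation of contact events does not by itself sharpen the cost of leaping over bad blocks, nor does it improve the large-deviation probability $p_\ell$; both of those quantities are fixed by $K$ and by the law of $\go$, independently of FKG. So the block argument you outline stalls exactly where you say it does, at $O(u)$.

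The paper uses FKG for a completely different purpose. It argues by contradiction: assume $\partial_h^+\tf(\gb,h_c(\gb))=c_0>0$. The role of log-convexity is to prove a \emph{monotonicity} statement for the finite-volume contact fraction, namely
\[
\frac{1}{N}\,\bbE\Bigl[\bE^{\gb,h,\go}_N\Bigl(\sum_{n=1}^N\delta_n\Bigr)\Bigr]\ \ge\ \partial_h\tf(\gb,h)
\]
for every $N$ and $h$. This is obtained by conditioning on $\{\delta_{iN}=1:\,i=1,\dots,M-1\}$ inside a system of size $MN$; FKG says this conditioning can only increase the expected number of contacts, and the conditional system factorises into $M$ independent copies of size $N$. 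Separately, still using the Gaussian shift trick (so the Gaussian hypothesis is in fact also used here), the paper shows that at $h=h_c(\gb)$ the finite-volume contact fraction has $\limsup$ strictly below $c_0$: one first proves that with positive probability the contact fraction is below any $\gep>0$ (otherwise one could manufacture positive free energy at criticality), and that it never exceeds $c_0$ in the limit. Continuity in $h$ then produces some $h>h_c(\gb)$ and some $N$ with finite-volume contact fraction below $c_0$, contradicting the FKG inequality above.

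In short: abandon the attempt to feed FKG into the block/leap estimate, and instead use it to compare the finite-volume contact fraction with $\partial_h\tf$. The contradiction then comes from an independent (Gaussian) argument bounding the finite-volume contact fraction at criticality.
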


\begin{remark}
The $\log$-convex assumption is not that restrictive and is rather natural as assumption \eqref{rules} already implies that the derivative of $K$ tends to zero.
A particular instance of $\log$-convex $K$ is the case where $\tau$ is the set of return times to zero  of a one dimensional nearest-neighbor random walk on $\bbZ$. This is related to $\log$-convexity  of  the  sequence of Catalan numbers (see \cite{cf:BF} for a paper on the subject).
\end{remark}

\subsection{The smoothing for polynomial tails}\label{smoothpol}

Let us explain briefly in this section why the strategy from \cite{cf:GT05} fails to give any results in the case of stretched exponential renewals (for more details the reader should refer to the original article). For simplicity we assume here that the environment is Gaussian and that $\gb=1$.

\medskip

The main idea in \cite{cf:GT05} is the following. Let $h=h_c(\gb)+u$ be fixed, and $N$ be chosen very large.
We look at a system at the critical point $h_c(\gb)$  (for which the free energy is zero): 
in a typical segment of length $N$ the empirical mean of $\go$ should be of order $0$ due to the law of large number ; however, with probability of order $\exp(- N u^2/2 )$ the empirical mean is larger than $u$. In that case, the system does not locally look critical and the partition function corresponding to the segment should be of order $e^{N \tf(\gb,h_c(\gb)+u)}$, if $N$ is chosen sufficiently large to avoid finite size effects.

\medskip

The distance between these segments of length $N$ which give an unusually "good" contribution to the partition function should be typically huge, that is, of order $\exp(u^2 N/2 )$, and thus the cost for making a huge jump between two consecutive good segments to avoid bad environment should be of order 
$K(\exp(u^2 N/2))$. As the free-energy at criticality is zero, the strategy consisting in visiting all the "good" segments and avoiding all the bad ones should not give an exponentially large contribution to the partition function, hence the cost of making the large jump should completely compensate for the energy reward one gets when visiting a good segment. 
For this reason one must have for sufficiently large $N$

\begin{equation}\label{kaboom}
K(\exp(u^2 N/2))e^{N \tf(\gb,h_c(\gb)+u)} < 1.
\end{equation}

In the case where $K$ has a power-law tail, this immediately yields a quadratic bound on the free energy. However, when $K$
has a lighter tail, \eqref{kaboom} fails to give any interesting information, as $K(\exp(u^2 N/2))$ decays super-exponentially.

\medskip

Some elements of this strategy can somehow be recycled (this is what is done in Section \ref{seclwb}) if one has some information about the behavior of finite volume systems (see Lemma \ref{finitevol}) . However, as will be seen, this fails to give a quadratic bound on the free-energy.

\subsection{Comparison with the case of renewals with exponential and sub-exponential tails}

An other instance of pinning model with absence of smoothing has been exhibited in  \cite{cf:Aivy}: disordered pinning of transient renewals with exponential tails 
(the case $K(n)=O(\exp(-n b))$ for some $b>0$).
However, let us mention that this is case quite special since when the tail of the renewal is exponential Remark \ref{rembc} is not valid anymore. On the contrary, the behavior of the system crucially depends on the contraint one imposes at the end point:
\begin{itemize}
\item The free-energy $\tf(\gb,h)$ defined by \eqref{freen}, which corresponds to a system constrained to be pinned,  is negative for small values of $h$.
\item The free energy of the system with no constraints is obtained by considering the best of two strategies: either the walk will avoid the wall completely or it will try to pin the end point. The reward for this is equal to $\max(0,\tf(\gb,h))$, which is easily shown to have a first order transition in $h$. 
\end{itemize}
Here the mechanism which triggers a first order phase transition is completely different: one has to perform an analysis of the  local fluctuations of the environment to see whether or not the benefit of a good rare region is sufficient to compensate the cost of a large jump reaching it. An upper bound on the fluctuations is obtained via concentration.
To obtain a lower-bound, we choose to restrict to the Gaussian model for simplicity, but similar ideas could in principle be implemented by the use of tilting (like in \cite{cf:CdH}).

\section{Preliminaries}

\subsection{Notation}

The dependence in $\gb$ and $h$ will frequently be omitted to lighten the notation.
When $A$ is an event for $\tau$ we set 
\begin{equation}
  \label{eq:event}
  Z^\go_{N}(A):=\bE\left[e^{\sum_{n=1}^N(\beta\go_n+h)\delta_n}\delta_N\ind_A\right].
\end{equation}
For $k\in \bbN$, the shift operator $\theta^k$ acting on the sequence $\go$ is defined by 
\begin{equation} 
\theta^k \go_n:= \go_{n+k}.
\end{equation}
For any couple of integers $a\le b$ we set 
\begin{equation}\label{znab}
 Z^{\go}_{[a,b]}=e^{(\gb \go_a+h)\ind_{a> 0}}  Z^{\theta^a \go}_{b-a}.
\end{equation}
to be the partition function associated to the segment $[a,b]$ (with the convention that $Z^\go_0=1$). Note that the environment at the starting point of the interval $a$ is taken into account only for $a>0$ (for technical reasons).

\medskip
For $\gep>0$ we define
\begin{equation}
\label{abepsilon}
\begin{split}
\cA^\gep&:=\{ \tau \ | \ \#( \tau\cap (0,N])\  |  \le \gep N, N\in \tau \},\\
\cB^\gep&:=\{ \tau \ | \ \#( \tau\cap (0,N])\  |  > \gep N, N\in \tau \},
\end{split}
\end{equation}
the set of renewals whose contact fraction is smaller, resp. larger, than $\gep$.

\subsection{Finite volume bounds for the free energy}
The following result allows to estimate the free-energy only knowing the value of $\frac{1}{N}\bbE\left[\log Z^\go_N\right]$, for a given $N$.

\begin{lemma}\label{finitevol}
There exists a constant $c$ such that for every $N$, $\gb$ and $h$, 
\begin{equation}\begin{split}
\frac{1}{N}\bbE\left[\log Z^\go_N\right]& \le \tf(\gb,h),\\
\frac{1}{N}\bbE\left[\log Z^\go_N\right]& \ge \tf(\gb,h)- \frac{N^{\zeta-1}}{1-2^{\zeta-1}}-\frac{2(\gl(\gb)+h)_++c}{N}
\end{split}
\end{equation}

\end{lemma}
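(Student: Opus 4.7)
The first inequality is just Fekete's lemma. Restricting the renewal to pass through $N$ factorises $Z^\go_{M+N}\ge Z^\go_N\cdot Z^{\theta^N\go}_M$, and by shift-invariance of $\go$ the sequence $a_N:=\bbE[\log Z^\go_N]$ is super-additive: $a_{M+N}\ge a_N+a_M$. Hence $a_N/N\le\sup_k a_k/k$, which equals $\tf(\gb,h)$ by \eqref{freen}.

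For the second inequality I would run a doubling argument whose goal is the nearly-subadditive bound
\begin{equation*}
a_{2N}\ \le\ 2a_N+g(N),\qquad g(N)\le 2N^{\zeta}+c_0(\gb,h),\qquad c_0(\gb,h)\le 2(\gl(\gb)+h)_++c.
\end{equation*}
Iterating this $k$ times gives $a_{2^kN}\le 2^ka_N+\sum_{i=0}^{k-1}2^{k-1-i}g(2^iN)$, so dividing by $2^kN$ and letting $k\to\infty$:
\begin{equation*}
\tf(\gb,h)-\frac{a_N}{N}\ \le\ \frac{1}{N}\sum_{i=0}^{\infty}2^{-i-1}g(2^iN).
\end{equation*}
The geometric series $\sum_{i\ge 0}2^{-i-1}\cdot 2(2^iN)^\zeta=N^\zeta/(1-2^{\zeta-1})$ converges since $\zeta<1$ and produces the advertised $N^{\zeta-1}/(1-2^{\zeta-1})$; the constant piece sums to $c_0/N$.

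The doubling inequality itself comes from the decomposition according to whether $N\in\tau$ or the renewal jumps over $N$:
\begin{equation*}
Z^\go_{2N}\ =\ Z^\go_N Z^{\theta^N\go}_N+J^\go,\qquad J^\go:=\sum_{\substack{0\le a<N\\ N<b\le 2N}}Z^\go_a\,K(b-a)\,e^{\gb\go_b+h}\,Z^{\theta^b\go}_{2N-b}.
\end{equation*}
The elementary bound $\log(A+B)\le\log A+B/A$ then yields, after taking $\bbE$, that it suffices to prove $\bbE\!\left[J^\go/(Z^\go_N Z^{\theta^N\go}_N)\right]\le g(N)$.

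\textbf{Main obstacle.} The heart of the argument is the above estimate. Pointwise control is obtained by using the super-additive lower bounds $Z^\go_N\ge Z^\go_a\,K(N-a)e^{\gb\go_N+h}$ and $Z^{\theta^N\go}_N\ge K(b-N)e^{\gb\go_b+h}Z^{\theta^b\go}_{2N-b}$ (both from restricting $\tau$ to visit the missing point) to bound the generic summand by
\begin{equation*}
\frac{J^\go(a,b)}{Z^\go_N Z^{\theta^N\go}_N}\ \le\ \frac{K(b-a)}{K(N-a)K(b-N)}\,e^{-\gb\go_N-h}.
\end{equation*}
The two-sided estimate in \eqref{eq:K} combined with the concavity inequality $u^\zeta+v^\zeta-(u+v)^\zeta\le(2-2^\zeta)N^\zeta$ for $u,v\in[1,N]$, $\zeta\in(0,1)$, controls the ratio of $K$'s by $Ce^{N^\zeta}$. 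The delicate step, and the one I expect to require the most care, is the integration against $\go$: a naive annealed bound produces a factor $e^{\gl(-\gb)-h}$ which blows up when $h<0$ or $\gb$ is large. The remedy is to regroup the factor $e^{\gb\go_N}$ shared by numerator and denominator before integrating, and to apply annealed identities only to genuinely independent pieces of the environment, so that the extraneous $\go$-integration produces only the boundary correction $(\gl(\gb)+h)_+$ rather than $|h|+\gb$. Once this careful bookkeeping is carried out, one obtains $g(N)\le 2N^\zeta+2(\gl(\gb)+h)_++c$ and the doubling argument closes.
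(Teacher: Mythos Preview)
Your overall architecture --- super-additivity for the first inequality, a doubling estimate $a_{2N}\le 2a_N+g(N)$ iterated along dyadic scales for the second, and the decomposition of $Z^\go_{2N}$ according to whether $N\in\tau$ --- is exactly the paper's. The pointwise comparison of $J^\go$ with $Z^\go_N Z^{\theta^N\go}_N$ via the ratio $K(b-a)/(K(N-a)K(b-N))\le Ce^{N^\zeta}$ is also the same.

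\medskip

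There is, however, a genuine gap at the step where you write ``$\log(A+B)\le\log A+B/A$ then yields \dots it suffices to prove $\bbE[J^\go/(Z^\go_N Z^{\theta^N\go}_N)]\le g(N)$'' with $g(N)\le 2N^{\zeta}+c_0$. Your own pointwise bound gives, after summing over $a,b$,
\[
\frac{J^\go}{Z^\go_N Z^{\theta^N\go}_N}\ \le\ C'\,e^{N^{\zeta}}\,e^{-\gb\go_N-h},
\]
so $\bbE[J^\go/(Z^\go_N Z^{\theta^N\go}_N)]$ is of order $e^{N^\zeta}$, not $N^\zeta$. (For $h$ well below $h_c(\gb)$ this order is in fact achieved: $Z^\go_N\asymp K(N)$ and the single jump from $0$ to $2N$ already makes the ratio $\asymp K(2N)/K(N)^2\asymp e^{(2-2^\zeta)N^\zeta}$.) With $g(N)\sim e^{N^\zeta}$ the series $\sum_i 2^{-i-1}g(2^iN)$ diverges and the iteration produces nothing.

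\medskip

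The fix is not to linearise the logarithm. The paper keeps
\[
\log Z^\go_{2N}=\log\bigl(Z^\go_N Z^{\theta^N\go}_N\bigr)+\log\!\Bigl(1+\tfrac{J^\go}{Z^\go_N Z^{\theta^N\go}_N}\Bigr),
\]
bounds the second term pointwise by $\log\bigl(1+Ce^{N^\zeta}e^{-\gb\go_N-h}\bigr)$, and then applies Jensen (concavity of $\log$) to obtain
\[
\bbE\bigl[\log(1+Ce^{N^\zeta}e^{-\gb\go_N-h})\bigr]\ \le\ \log\bigl(1+Ce^{N^\zeta}e^{\gl(-\gb)-h}\bigr)\ \le\ N^{\zeta}+\log(2C)+(\gl(-\gb)-h)_+.
\]
This is where the $N^\zeta$ (rather than $e^{N^\zeta}$) comes from, and also where the $(\gl(-\gb)-h)_+$ term in the final bound appears. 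In particular, the $\go$-integration you flag as the ``main obstacle'' is a one-line Jensen step; the real issue is that $\log(1+x)\le x$ discards precisely the logarithmic compression needed to turn $e^{N^\zeta}$ into $N^\zeta$.
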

\begin{proof}
The first inequality is a consequence of following super-multiplicativity  property 
\begin{equation}
Z^{\go}_{N+M}\ge Z^{\go}_{N}\times Z^{\theta^N \go}_{M}
\end{equation}
(see e.g.\ the proof of \cite[Proposition 4.2]{cf:GB}).
For the second one the proof is similar to \cite[Proposition 2.7]{cf:Alea}, one has 

\begin{equation}\label{anelka}
Z^\go_{2N}= \bE\left[e^{\sum_{n=1}^N(\beta\go_n+h)\delta_n}\delta_{N}\delta_{2N}\right]
+\bE\left[e^{\sum_{n=1}^N(\beta\go_n+h)\delta_n}(1-\delta_{N})\delta_{2N}\right].
\end{equation}
The first term is equal to 
$Z^\go_{N}Z^{\theta^N \go}_{N}.$
For the second term, by comparing the weight of each $\tau$ to the one of $\tau\cup\{N\}$ one obtains 
\begin{multline}\label{patagony}
\bE\left[e^{\sum_{n=1}^N(\beta\go_n+h)\delta_n}(1-\delta_{N})\delta_{2N}\right]\\
\le  Z^\go_{N}Z^{\theta^N \go}_{N}e^{-\gb \go_N-h}\max_{0\le a<N<b\le 2N} \frac{K(b-a)}{K(N-a)K(b-N)}\\
\le C e^{-\gb \go_N-h} Z^\go_{N}Z^{\theta^N \go}_{N}\exp(N^{\zeta}),
\end{multline}
for some constant $C>1$.
The last line of \eqref{patagony} is obtained by observing that for any choice of $0\le a<N<b\le 2N$
$$(N-a)^\zeta	+(b-N)^\zeta-(b-a)^\zeta\le N^{\zeta}.$$

Hence taking the $\log$ and expectation in \eqref{anelka} one has 

\begin{equation}\begin{split}
\frac{1}{2N}\bbE\left[ \log Z^\go_{2N}\right]&\le \frac{1}{N}\bbE\left[\log Z^\go_{N}\right]+ \frac{1}{N} \bbE\left[\log \left( 1+Ce^{-\gb \go_N-h}  \exp(N^{\zeta})\right)\right]
\\
&\le \frac{1}{N}\bbE\left[ \log Z^\go_{N}\right]+\frac{1}{N} \log \left( 1+e^{\gl(-\gb)-h}C \exp(N^{\zeta})\right)\\
&\le \frac{1}{N}\bbE\left[ \log Z^\go_{N}\right]+N^{\zeta-1}+\frac{1}{N} \log \left( 1+ C e^{\gl(-\gb)-h}\right),\\
&\le \frac{1}{N}\bbE\left[ \log Z^\go_{N}\right]+N^{\zeta-1}+\frac{1}{N} \left[ \log(2C) + (\gl(-\gb)-h)_+ \right],\\
\end{split}\end{equation}
where the first inequality is simply Jensen's inequality.
Then we iterate the inequality and obtain 
\begin{equation}
\tf(\gb,h)\le \frac{1}{N}\bbE\left[ \log Z^\go_{N}\right]+\frac{N^{\zeta-1}}{1-2^{\zeta-1}}+\frac{2}{N} \left[ \log(2C) + (\gl(-\gb)-h)_+ \right].
\end{equation}
\end{proof}

\subsection{The FKG inequality for $\log$-convex renewals}

For the proof of Theorem \ref{mainres3} (and only then), we need to use the fact that the presence of renewal point are positively correlated.
This is where we need the assumption on the $\log$ convexity of the function $K$. 

\medskip

In this subsection $\tau$ denotes a subset of $\{1,\dots,N\}$ which contains $N$,
and with some abuse of notation  $\bP^{\gb,h,\go}_N$ is considered to be a law on $\cP(\{1,\dots,N\})$ (the set of subsets of $\{1,\dots,N\}$).

\medskip

Now let us introduce some definitions.
A function $f: \cP(\{1,\dots,N\})\to \bbR $ is said to be increasing if 
\begin{equation}
\forall \tau, \tau' \in \cP(\{1,\dots,N\})\quad \tau\subset \tau' \Rightarrow  f(\tau)\le f(\tau').
\end{equation}

Note that the following result was proved in \cite{cf:BW} for renewal processes in continuous time. Our proof is essentially similar and is based on the use of the 
celebrated FKG criterion from \cite{cf:FKG} but we choose to include it for the sake of completeness. 

\begin{proposition}
Assume that the function $K$ is $\log$-convex.
Then for all $\gb, \go, h$ and $N$, the $\bP^{\gb,h,\go}_N$ satisfies the FKG inequality.  Namely, for all increasing functions $f$ and $g$
\begin{equation}
\bE^{\gb,h,\go}_N[f(\tau) g(\tau)]\ge \bE^{\gb,h,\go}_N[f(\tau)] \bE^{\gb,h,\go}_N[g(\tau)].
\end{equation}
\end{proposition}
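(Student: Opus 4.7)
The plan is to invoke the classical FKG lattice criterion of \cite{cf:FKG}: a probability measure $\mu$ on the finite distributive lattice $\cP(\{1,\dots,N\})$ (with $\vee=\cup$, $\wedge=\cap$) satisfies the FKG inequality as soon as it is log-supermodular, meaning
\[
\mu(\tau)\,\mu(\sigma)\le \mu(\tau\cup\sigma)\,\mu(\tau\cap\sigma)\qquad\text{for all }\tau,\sigma\subset\{1,\dots,N\}.
\]
Writing the mass function of $\bP^{\gb,h,\go}_N$ as
\[
\bP^{\gb,h,\go}_N(\tau)\propto \ind_{\{N\in\tau\}}\Bigl(\prod_{n\in\tau}e^{\gb\go_n+h}\Bigr)\mu_K(\tau),\qquad \mu_K(\tau):=\prod_{i=1}^{|\tau|}K(t_i-t_{i-1}),
\]
with $0=t_0<t_1<\dots<t_{|\tau|}$ the ordered elements of $\{0\}\cup\tau$, the plan is to remark that the indicator $\ind_{\{N\in\tau\}}$ is preserved under $\cup$ and $\cap$, and that the energetic factor $\prod_{n\in\tau}e^{\gb\go_n+h}$ is trivially multiplicative in the sense needed, since it depends on $\tau$ only through the sum $\sum_n\ind_{n\in\tau}$. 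Both contributions therefore cancel in the lattice inequality and the problem reduces to verifying log-supermodularity of the renewal weight $\mu_K$.

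By a standard reduction, the lattice condition for $\mu_K$ follows from its restriction to pairs of configurations differing in exactly two coordinates: I will check that for every $i<j$ in $\{1,\dots,N-1\}$ and every $\tau_0\subset\{1,\dots,N\}\setminus\{i,j\}$ with $N\in\tau_0$,
\[
\mu_K(\tau_0\cup\{i,j\})\,\mu_K(\tau_0)\ge \mu_K(\tau_0\cup\{i\})\,\mu_K(\tau_0\cup\{j\}).
\]
Split according to whether $\tau_0\cap(i,j)$ is empty or not. If $\tau_0$ contains at least one point strictly between $i$ and $j$, then inserting $i$ only touches the gap on one side of the intermediate $\tau_0$-points while inserting $j$ only touches the gap on the other side; the two modifications decouple and the inequality holds with equality after cancellation. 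If on the contrary $\tau_0\cap(i,j)=\emptyset$, let $a$ be the largest element of $\{0\}\cup(\tau_0\cap[0,i-1])$ and $d$ the smallest element of $\tau_0\cap[j+1,N]$, and set $\alpha:=i-a,\beta:=j-i,\gamma:=d-j$; then every factor not touching the interval $[a,d]$ is common to all four configurations and the inequality reduces to
\[
K(\beta)\,K(\alpha+\beta+\gamma)\ge K(\alpha+\beta)\,K(\beta+\gamma).
\]

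This last display is exactly the consequence of \eqref{logconvex}: the assumption is equivalent to the map $m\mapsto \log K(m+1)-\log K(m)$ being non-decreasing on $\bbN$, hence iteration shows that $m\mapsto \log K(m+\alpha)-\log K(m)$ is also non-decreasing, and comparing it at $m=\beta$ and $m=\beta+\gamma$ yields the required inequality. No serious obstacle arises in this plan: the only step requiring any care is the two-case bookkeeping separating configurations according to whether a $\tau_0$-point lies between $i$ and $j$, but this is purely combinatorial. The approach is essentially the discrete analogue of the argument of \cite{cf:BW}, which handles the continuous-time case via the same log-convex pairing identity.
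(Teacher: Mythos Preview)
Your proof is correct and follows essentially the same route as the paper: both arguments reduce the FKG inequality to the lattice condition of \cite{cf:FKG}, strip off the energetic factor and the constraint $N\in\tau$, and are left with checking log-supermodularity of the renewal weight $\mu_K$; both then boil this down to a single inequality on $K$ that is exactly the content of log-convexity. The only organizational difference is that the paper performs an induction that adds points of $\tau'\setminus\tau$ one at a time, leading to the monotonicity of $(\alpha,\beta)\mapsto \frac{K(\beta-a)K(a-\alpha)}{K(\beta-\alpha)}$, whereas you invoke the equivalent ``increasing differences'' (two-point) characterization of supermodularity, leading to $K(\beta)K(\alpha+\beta+\gamma)\ge K(\alpha+\beta)K(\beta+\gamma)$; these two displayed inequalities are the same statement up to relabeling.
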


\begin{proof}
From \cite[Proposition 1]{cf:FKG}, it is sufficient to check that for any $\tau$ and $\tau'$ one has 

\begin{equation}\label{ooo}
\bP_N^{\gb,h,\go}(\tau\cup \tau') \bP_N^{\gb,h,\go}(\tau\cap \tau') \ge \bP_N^{\gb,h,\go}(\tau) \bP_N^{\gb,h,\go}(\tau').
\end{equation}
For $\sigma\subset \{0,\dots,N\}$ whose elements are $\sigma_0=0<\sigma_1<\dots<\sigma_m=N$,  we set 
$$K(\sigma)=\prod_{i=1}^m K(\sigma_i-\sigma_{i-1}).$$
The reader can check that after simplification \eqref{ooo} is equivalent to 

\begin{equation}\label{toto}
K(\tau\cup \tau')K(\tau\cap \tau') \ge K(\tau) K(\tau').
\end{equation}
This inequality is obviously true when $\tau'\subset \tau$. Then we proceed by induction and it is sufficient to
 check  that if $a\notin \tau\cup \tau' $ and \eqref{toto} holds for $\tau$ and $\tau'$, then it holds for 
$\tau$ and $\tau'\cup\{a\}$. To this purpose we only need to check that for any $\tau$, $\tau'$ and $a\notin \tau\cup \tau' $ we have 
\begin{equation}\label{hopla}
\frac{K(\tau\cup \tau' \cup\{a\})}{K(\tau\cup \tau')}\ge \frac{K(\tau' \cup\{a\})}{K(\tau')}.
\end{equation}
Let us set
\begin{equation}
\begin{split} 
\alpha_1:=\inf\{ x < a \ | \ x\in \tau\cup \tau' \}, \quad & \beta_1:=\inf\{ x > a \ | \ x\in \tau\cup \tau' \},\\
\alpha_2:=\inf\{ x < a \ | \ x\in \tau' \}, \quad & \beta_2:=\inf\{ x > a \ | \ x\in  \tau' \}.
\end{split}
\end{equation}
We remark that $$\alpha_2\le \alpha_1<a<\beta_1\le \beta_2.$$
The inequality \eqref{hopla} is equivalent to 
\begin{equation}\label{totti}
 \frac{K(\beta_1-a)K(a-\alpha_1)}{K(\beta_1-\alpha_1)}\ge \frac{K(\beta_2-a)K(a-\alpha_2)}{K(\beta_2-\alpha_2)}.
\end{equation}
By convexity of $\log K$ the function  

\begin{equation}
(\alpha,\gb)\mapsto   \frac{K(\beta-a)K(a-\alpha)}{K(\beta-\alpha)}
\end{equation} is non-increasing in $\gb$ and non-decreasing in $\alpha$ on the set $\{ (\alpha,\beta)  \ | \ \alpha<a<\beta \}$ . Thus \eqref{totti} holds. 
\end{proof}

\section{Proof of Theorem \ref{mainres}}

\subsection{Decomposition of the proof}

The key point consists in proving the following upper bound on $Z_N^\go(\cA^\gep)$ (recall \eqref{abepsilon} and \eqref{eq:Znh}).

\begin{proposition}\label{keypoint}
There exist positive constants $\gep_0$ and $C$ such that for all $\gep\le \gep_0$
we have for all $h\le 1$ and $\gb>0$, , almost surely for all $N$ sufficiently large,
\begin{equation}\label{orange}
\frac{1}{N} \log Z^\go_N(\cA^\gep)\le \frac{1}{2}\tf(h,\gb) + \max_{l\ge \gep^{-1}} \left( C\gb \sqrt{\frac{\gep \log l}{l}}-\frac{1}{4}l^{\zeta-1}  \right).
\end{equation}
\end{proposition}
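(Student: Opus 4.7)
My plan is to decompose $Z^\go_N(\cA^\gep)$ along a skeleton of long voids in $\tau$, apply concentration for convex Lipschitz functions with a seminorm reduced by the contact-fraction constraint, and balance combinatorial complexity against the concentration tail via a union bound. Fix a parameter $l \ge 1/\gep$ to be optimized at the end. For each $\tau \in \cA^\gep$, I assign a skeleton $D(\tau) = \{(a_i, b_i)\}_{i=1}^m$ of disjoint sub-intervals of length exactly $l$ contained in the gaps of $\tau$; chunking each gap of length at least $l$ into such $l$-pieces, one arranges $L := \sum_i (b_i - a_i) \ge N/2$ and $m \le N/l$, thanks to the fact that gaps average $\ge 1/\gep$ under $\cA^\gep$. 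Grouping by skeleton then gives the factorized upper bound
\begin{equation*}
Z_N^\go(\cA^\gep) \le \sum_D Z_D^\go, \qquad Z_D^\go \le \prod_{i=1}^m K(l)\, e^{\gb \go_{b_i} + h} \prod_{j=0}^m Z_{a_{j+1} - b_j}^{\theta^{b_j}\go}(\cA^\gep_j),
\end{equation*}
where $\cA^\gep_j$ is a piece-level version of the contact-fraction constraint that factorizes while still implying $\sum_j \#\tau_j \le \gep N$.

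For the deterministic estimate, \eqref{eq:K} gives $K(l) \le c_K e^{-l^\zeta}$ and Lemma~\ref{finitevol} bounds $\bbE \log Z_s^{\theta\go}(\cA^\gep_j) \le s\tf$ for each intermediate piece. Together with $ml = L \ge N/2$ and $h \le 1$, one obtains
\begin{equation*}
\bbE \log Z_D^\go \le -m l^\zeta + (N - L)\tf + O(m) \le \frac{N\tf}{2} - \frac{N l^{\zeta-1}}{2} + O(N/l).
\end{equation*}
For the fluctuation, observe that $\log Z_D^\go$ is convex in $\go$ (log of a sum of linear exponentials) and its Euclidean Lipschitz seminorm satisfies $\|\log Z_D^\go\|_{\lip}^2 \le \gb^2 \cdot \bE^{\ldots}[\#\tau] \le \gb^2 \gep N$, the first inequality because the gradient in $\go_n$ is $\gb$ times a Gibbs probability and $\sum_n (\bE[\delta_n])^2 \le \bE[\#\tau]$, the second by the constraint. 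Assumption~\ref{logsob} then gives
\begin{equation*}
\bbP\bigl(\log Z_D^\go \ge \bbE \log Z_D^\go + u\bigr) \le C_1 \exp\bigl(-u^2 / (C_2 \gb^2 \gep N)\bigr).
\end{equation*}

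The number of admissible skeletons with $m \le N/l$ chunks of length $l$ in $[0,N]$ is at most $\binom{N}{2m} \le \exp(O((N/l) \log l))$. Choosing $u = C\gb N\sqrt{\gep \log l / l}$ with $C$ large enough makes the total failure probability summable in $N$, so Borel--Cantelli yields, almost surely for all large $N$,
\begin{equation*}
\frac{1}{N} \log Z_N^\go(\cA^\gep) \le \frac{\tf}{2} - \frac{l^{\zeta-1}}{4} + C\gb\sqrt{\frac{\gep \log l}{l}}.
\end{equation*}
The proposition then follows by choosing the $l^{*} \ge 1/\gep$ that maximizes the second summand.

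The main obstacle is the construction of the piece-level constraint $\cA^\gep_j$: the global $\cA^\gep$ does not factor cleanly over the intermediate pieces of the skeleton decomposition, so one must either distribute it across pieces or pass through a tilt/annealed comparison, while preserving both the factorized structure and the bound $\gb\sqrt{\gep N}$ on the Lipschitz norm. A secondary obstacle is that a naive ``gaps of size $\ge l$'' selection degrades once $l$ grows beyond $\asymp 1/(2\gep)$, so using equal-length chunks of length $l$ inside larger gaps (rather than whole gaps) is essential in order to keep the cost $ml^\zeta \ge Nl^{\zeta-1}/2$ that drives the proposition.
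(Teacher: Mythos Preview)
There is a real gap, and it sits exactly where you flag your ``secondary obstacle.'' Cutting a single gap of length $g$ into $m_0=\lfloor g/l\rfloor$ pieces of length $l$ does \emph{not} produce a factor $K(l)^{m_0}$ in an upper bound: the actual renewal jumps the whole gap and contributes $K(g)\asymp e^{-g^\zeta}$, and since $\zeta<1$ one has $(m_0 l)^\zeta<m_0\,l^\zeta$, hence $K(g)>K(l)^{m_0}$. The inequality you need for your factorized upper bound goes the wrong way. If instead you keep only one chunk per long gap and use the legitimate bound $K(g)\le K(l)$, then $m$ becomes the \emph{number of gaps of length $\ge l$}, and for a fixed $l\ge 1/\gep$ this can be far too small: take for instance $\gep N-1$ gaps of length $l-1$ and one compensating gap; then $\tau\in\cA^\gep$ but the total chunk length $L$ is of order $\gep N$, not $N/2$. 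So neither reading of the chunking delivers $ml^\zeta\ge \tfrac12 N l^{\zeta-1}$.

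The paper avoids both problems by making the skeleton consist of the \emph{actual} long jumps of $\tau$ (those of length $\ge(2\gep)^{-1}$), with their true endpoints $(\bt',\bt)$, and by letting $L$ be the number of such jumps; the decomposition $\cA^\gep=\bigcup_L\bigcup_{(\bt',\bt)\in\cT(L)}A_{(\bt',\bt)}$ is then exact, the factor $\prod_i K(t_i-t'_i)$ is the genuine renewal weight, and since short jumps occupy at most $\gep N\cdot(2\gep)^{-1}=N/2$ of the length, Jensen for $x\mapsto x^\zeta$ gives $\sum_i (t_i-t'_i)^\zeta\ge L\,\bigl(\tfrac{N}{2L}\bigr)^\zeta\ge\tfrac12 N l^{\zeta-1}$ with $l:=N/L$. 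Crucially $l$ is \emph{adaptive} in $\tau$, and the $\max_{l\ge\gep^{-1}}$ in the statement is not an optimization but the price of covering all possible values of $L$.

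Your ``main obstacle'' (piece-level constraints $\cA^\gep_j$) is, by contrast, a non-issue: for the expectation bound the paper simply drops the constraint on the intermediate segments and uses the full $Z^\go_{[t_{i-1},t'_i]}$ with Lemma~\ref{finitevol}; the Lipschitz bound $\gb\sqrt{\gep N}$ for concentration is applied directly to $\log Z^\go_N(A_{(\bt',\bt)})$, before any factorization, which is legitimate because $A_{(\bt',\bt)}\subset\cA^\gep$.
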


The restriction $h\le 1$ is chosen for convenience but does not convey any particular significance ($h<c$ for some $c>0$ would be just as good).
The proof of this statement is postponed to Section \ref{quatpoind}.

Now, we observe that if $\gep$ is chosen to be larger than the asymptotic contact fraction $\partial_h \tf(\gb,h)$, the l.h.s.\ of \eqref{orange} converges to the the free-energy.
\begin{lemma}\label{fractio}
For every $h>h_c(\gb)$ when $\gep>\partial_h\tf(\gb,h)$ one has.

\begin{equation}\label{limimi}
 \liminf_{N\to \infty} \bP^{\gb,h,\go}_N\left[\cA^\gep\right]>0.
\end{equation}
As a consequence 
\begin{equation}
\limsup_{N\to \infty} \frac{1}{N}  \log Z^\go_N(\cA^\gep)=\tf(\gb,h).
\end{equation}
\end{lemma}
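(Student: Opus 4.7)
The plan is to combine Markov's inequality with a convexity argument that converts the annealed-derivative formula \eqref{gro} into an a.s.\ statement about the quenched contact fraction.

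First I would establish that for every $h>h_c(\gb)$,
\begin{equation*}
\frac{1}{N}\bE^{\gb,h,\go}_N\!\left[\sum_{n=1}^N\delta_n\right]=\partial_h\!\left(\tfrac{1}{N}\log Z^\go_N\right)\xrightarrow[N\to\infty]{}\partial_h \tf(\gb,h)\quad\text{a.s.}
\end{equation*}
This is a standard consequence of convexity: the functions $h\mapsto \tfrac{1}{N}\log Z^\go_N$ are convex, converge $\bbP$-a.s.\ to the convex function $\tf(\gb,\cdot)$, and by \cite{cf:Alea} the limit is differentiable on $(h_c(\gb),\infty)$, so the derivatives converge at every such point.

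Next I would note that under $\bP^{\gb,h,\go}_N$ one has $\delta_N=1$ by definition \eqref{density}, so the constraint $N\in\tau$ appearing in $\cA^\gep$ is automatic and $\bP^{\gb,h,\go}_N[\cA^\gep]=\bP^{\gb,h,\go}_N[\#(\tau\cap(0,N])\le\gep N]$. A direct application of Markov's inequality yields
\begin{equation*}
\bP^{\gb,h,\go}_N\!\left[\#(\tau\cap(0,N])>\gep N\right]\le \frac{1}{\gep N}\,\bE^{\gb,h,\go}_N\!\left[\sum_{n=1}^N\delta_n\right].
\end{equation*}
Combined with the previous display and the hypothesis $\gep>\partial_h\tf(\gb,h)$, this gives a.s.
\begin{equation*}
\liminf_{N\to\infty}\bP^{\gb,h,\go}_N[\cA^\gep]\ge 1-\frac{\partial_h\tf(\gb,h)}{\gep}>0,
\end{equation*}
which is \eqref{limimi}.

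For the second assertion I would use the identity $Z^\go_N(\cA^\gep)=Z^\go_N\cdot \bP^{\gb,h,\go}_N[\cA^\gep]$ and take logarithms:
\begin{equation*}
\frac{1}{N}\log Z^\go_N(\cA^\gep)=\frac{1}{N}\log Z^\go_N+\frac{1}{N}\log \bP^{\gb,h,\go}_N[\cA^\gep].
\end{equation*}
The first term converges a.s.\ to $\tf(\gb,h)$ by \eqref{eq:F_nh}. The second term is nonpositive (giving $\limsup\le \tf(\gb,h)$ for free) and, by the liminf just obtained, is eventually bounded below by $\tfrac{1}{N}\log\bigl(1-\partial_h\tf/\gep+o(1)\bigr)$, which tends to $0$. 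Hence the limit of $\tfrac{1}{N}\log Z^\go_N(\cA^\gep)$ is exactly $\tf(\gb,h)$, as claimed. There is no real obstacle here beyond cleanly invoking the convexity-based derivative convergence; Markov does all the work.
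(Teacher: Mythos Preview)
Your proof is correct and follows essentially the same route as the paper's: both use Markov's inequality on the contact count together with the a.s.\ convergence of the quenched contact fraction to $\partial_h\tf(\gb,h)$ (which the paper records as \eqref{gro} and you justify via the standard convexity argument). Your write-up is in fact slightly more complete, since you spell out the derivation of the ``consequence'' $\limsup \tfrac{1}{N}\log Z^\go_N(\cA^\gep)=\tf(\gb,h)$ via the identity $Z^\go_N(\cA^\gep)=Z^\go_N\,\bP^{\gb,h,\go}_N[\cA^\gep]$, which the paper leaves implicit.
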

\begin{remark}
Without much more efforts, one can even  prove in fact that the limit in \eqref{limimi} is equal to one, but this is not necessary for our purpose.
\end{remark}

The idea to prove Theorem \ref{mainres} is to use \eqref{orange} where $\gep$ is replaced by $2\partial_h \tf(\gb,h)$
and $\frac{1}{N} \log Z_N^\go(\cA^\gep)$ is replaced by its limit: $\tf(\gb,h)$. This gives a differential inequality in $h$ which once integrated gives the claimed bounds on the free energy. Details follow at the end of the Section.

\begin{proof}[Proof of Lemma \ref{fractio}]
For simplicity (and with no loss in generality) assume that $\gep=(1+\delta) \partial_h\tf(\gb,h)$ for some $\delta<1$.
By \eqref{gro}, for $N$ sufficiently large 
\begin{equation}
\frac{1}{N}\bE^{\gb,h,\go}_N\left[\sum_{n=1}^N \delta_n\right]\le (1+\delta)(1-\delta/2) \partial_h \tf(\gb,h)=(1-\delta/2) \gep.
\end{equation}
As 
\begin{equation}
\frac{1}{N}\bE^{\gb,h,\go}_N\left[\sum_{n=1}^N \delta_n\right]\ge \gep \bP^{\gb,h,\go}_N \left[\cB^\gep\right],
\end{equation}
this implies 
\begin{equation}
 \bP^{\gb,h,\go}_N\left[\cB^\gep\right]\le 1-\delta/2.
\end{equation}
\end{proof}

\begin{proof}[Proof of Theorem \ref{mainres}]
Let $\gep_0$ be such that  Proposition \ref{keypoint} holds (in the case $\zeta>1/2$, we will also require $\gep_0$ to satisfy another condition).
If $h\le 1$ is such that 
\begin{equation}\label{defeph} 
\gep_h:=2\partial_h\tf(\gb,h)\le \gep_0,
\end{equation} 
we have
\begin{equation}\label{cramook}
\tf(\gb,h) \le  \max_{l\ge \gep_h^{-1}}\left( 2C\gb\sqrt{\frac{\gep_h \log l}{l}}- \frac{1}{2}l^{\zeta-1}  \right).
\end{equation}
Let us start with the case $\zeta>1/2$. By contradiction, let us  assume that,
\begin{equation}\label{mangepasdpin}
\lim_{h\to h_c(\gb)+} \partial_h \tf(\gb,h)< \frac{1}{2}\gep_0\min(1,\gb^{-2}).
\end{equation}
From a standard convexity argument (see \cite[Proposition 5.1]{cf:GB}) one has $h_c(\gb)\le 0$.  Thus we can find 
$h\in (h_c(\gb), 1]$  such that 
$$\gep_h\le \gep_0\min(1,\gb^{-2}).$$
Then for this value of $h$ the right-hand side of \eqref{cramook} is smaller than 

$$ \sup_{l\ge \gep_0^{-1}}\left( 2C\sqrt{\frac{\gep_0 \log l}{l}}- \frac{1}{2}l^{\zeta-1}  \right)$$
which is equal to zero if $\gep_0$ has been chosen sufficiently small. Hence we obtain a contradiction as $\tf(\gb,h)>0$.

\medskip

Let us move to the case $\zeta\le1/2$. We can assume that 
\begin{equation}\label{mangepasdpin2}
\lim_{h\to h_c(\gb)+} \partial_h \tf(\gb,h)=0 
\end{equation}
as if not, there is nothing to prove.

\medskip

For $h$ sufficiently close to the critical point we hence have $\gep_h\le \gep_0$ (and $h\le 1$)
and hence 
Equation \eqref{cramook} holds. 

Computing the maximum in the r.h.s.\ of \eqref{cramook} we obtain

\begin{equation}\label{ineqps}
\tf(\gb,h)\le \begin{cases} C\left( \gb^2 \gep_h |\log \gep_h |\right)^{\frac{1-\zeta}{1-2\zeta}} &\text{ for } \zeta<1/2 \\
\exp(- c(\gb^2 \gep_h)^{-1}) &\text{ for } \zeta= 1/2.
\end{cases}
\end{equation}

When $\zeta<1/2$, we note that the inverse of the function

$$f: x \to C \left( \gb^2 x | \log x |\right) ^{\frac{1-\zeta}{1-2\zeta}} $$
(which is increasing near zero)
satisfies 

\begin{equation}
f^{-1}(y)\stackrel{y\to 0}  {\sim} C ' \gb^{-2} y^{\frac{1-2\zeta}{1-\zeta}}   | \log y |^{-1}.
\end{equation}

Hence composing the inequality \eqref{ineqps} with $f^{-1}$ and replacing $\gep_h$ by its value  \eqref{defeph}, we obtain that in the vicinity of $h_c(\gb)_+$ we have

\begin{equation}
\partial_h \tf \ge c \gb^{-2} \tf^{\frac{1-2\zeta}{1-\zeta}} | \log \tf |^{-1},
\end{equation}
and hence

\begin{equation}\label{tobeintegrated}
\tf^{\frac{2\zeta-1}{1-\zeta}} |\log \tf| \partial_h  \tf \ge c \gb^{-2}.
\end{equation}
It is easy to check that this inequality is valid also in the case $\zeta=1/2$.
Now at the cost of modifying the constant $c$ (and the neighborhood of $h_c(\gb)$ if necessary)
the inequality implies

\begin{equation}\label{tobeintegrated2}
\partial_h \left[ \tf^{\frac{\zeta}{1-\zeta}}   |\log \tf|  \right] \ge c \gb^{-2},
\end{equation}
which once integrated implies 

\begin{equation}
 \tf(\gb,h_c(\gb)+u)^{\frac{\zeta}{1-\zeta}}   |\log \tf(\gb,h_c(\gb+u)|\ge c\gb^{-2}u.
\end{equation}

Composing the inequality with the inverse of the function $x\mapsto  x^{\frac{\zeta}{1-\zeta}}   |\log x|$ near zero gives the desired result.

Integrating the above inequality between $h_c(\gb)$ and $h$ yields the result.

\end{proof}

\subsection{Proof of Proposition \ref{keypoint}}\label{quatpoind}

A key tool in the proof is the following concentration inequality.

\begin{lemma}\label{concentration}
When Assumption \ref{logsob} holds then
for any event $A\subset \cA^\gep$

\begin{equation}
\bbP\left[\log Z^{\go}_N(A)-\bbE[\log Z^{\go}_N(A)]\ge t\right]\le C_1 \exp\left(-\frac{t^2}{C_2\gb^2 N\gep}\right).
\end{equation}
\end{lemma}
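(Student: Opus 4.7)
The plan is to apply Assumption \ref{logsob} directly to the random variable $F(\go):=\log Z^\go_N(A)$. For this I only need to verify two standard properties: (i) $F$ is convex in $(\go_1,\dots,\go_N)$, and (ii) $\|F\|_{\lip}\le \gb\sqrt{\gep N}$. Once these are established, plugging $\|F\|_{\lip}^2\le \gb^2\gep N$ into the Gaussian-type tail bound furnished by Assumption \ref{logsob} yields exactly the claimed inequality.

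For the convexity in (i), I would note that
$$Z^\go_N(A) \;=\; \bE\Bigl[e^{\sum_{n=1}^N(\gb \go_n+h)\delta_n}\ind_A\delta_N\Bigr]$$
is, for each realisation of $\tau$, the exponential of a function which is affine in $\go$. Hence $Z^\go_N(A)$ is a nonnegative mixture (over $\tau$) of exponentials of affine functions of $\go$, and $\log Z^\go_N(A)$ is a log-sum-exp expression, whose convexity is the content of the classical log-Laplace / H\"older argument.

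For the Lipschitz bound in (ii), I would differentiate under the expectation to get
$$\partial_{\go_n}\log Z^\go_N(A) \;=\; \gb\,\frac{Z^\go_N(A\cap\{n\in\tau\})}{Z^\go_N(A)} \;=:\; \gb\,\pi_n(\go),$$
with $\pi_n(\go)\in[0,1]$. Since $\pi_n(\go)^2\le \pi_n(\go)$ and the sum $\sum_{n=1}^N \pi_n(\go)$ is the Gibbs expectation of the contact number $\sum_{n=1}^N\delta_n$ under the measure conditioned on $A$, I obtain
$$|\nabla F(\go)|^2 \;=\; \gb^2\sum_{n=1}^N \pi_n(\go)^2 \;\le\; \gb^2\sum_{n=1}^N \pi_n(\go) \;\le\; \gb^2\gep N,$$
where the last inequality uses precisely the defining constraint $\sum_n\delta_n\le \gep N$ of the event $\cA^\gep\supset A$. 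This bounds $\|F\|_{\lip}^2$ uniformly in $\go$ by $\gb^2\gep N$, as required.

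I do not anticipate any real obstacle: the argument parallels the standard concentration estimate for $\log Z^\go_N$ in the pinning literature. The only noteworthy aspect, and the whole point of isolating $Z^\go_N(A)$ with $A\subset\cA^\gep$ instead of $Z^\go_N$, is that the restriction to the low-density event $\cA^\gep$ gains a factor $\gep$ in the squared Lipschitz constant, and hence in the variance scale inside the exponential. This improved $\gep$-dependence is exactly what will be needed to make Proposition \ref{keypoint} quantitative.
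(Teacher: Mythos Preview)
Your proposal is correct and follows essentially the same line as the paper: verify convexity of $\go\mapsto\log Z^\go_N(A)$ and bound its Lipschitz constant by $\gb\sqrt{\gep N}$, then invoke Assumption~\ref{logsob}. The only cosmetic difference is that the paper bounds the Lipschitz constant by a direct two-point comparison $|\log Z^\go_N(A)-\log Z^{\go'}_N(A)|\le \gb\max_{\tau\in A}\sum_{x\in\tau}|\go_x-\go'_x|$ followed by Cauchy--Schwarz, whereas you bound the gradient norm; both routes yield the same $\gb\sqrt{\gep N}$.
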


\begin{proof}

For any pair of environment $\go$ and $\go'$ one has

\begin{equation}
\left | \log \frac{ Z^{\go}_N(A)}{Z^{\go'}_N(A)} \right |\le \gb \max_{\big\{ \tau \subset [0,N] \ | \ |\tau \cap [0,N]|\le \gep N  \big\}} \sum_{x\in \tau}
|\go_x-\go'_x|\le \gb \sqrt{\gep N} \sqrt{\sum_{x=1}^N \go^2_x}. 
\end{equation}
Hence the Lipshitz norm of
$$\go\mapsto  \log  Z^{\go}_N(A)$$ is smaller than $\gb\sqrt{\gep N}$. It is also a convex function, thus the results follows from Assumption \ref{logsob}.
\end{proof}

Given $\tau \in \cA^\gep$, we define $\cL(\tau)$  the set of indices corresponding to the renewal jumps of length larger than $(2\gep)^{-1}$, and we denote by $L(\tau)$ the cardinal of $\cL(\tau)$.
\begin{equation}
\begin{split}
\cL(\tau)&:=\{ n \ | \ \tau_{n}\le N, (\tau_{n}-\tau_{n-1})\ge (2\gep)^{-1} \},\\
L(\tau)&:=\#\cL(\tau).  
\end{split}\end{equation}
We also set $l(\tau)=N/L(\tau)$.
Due to the definition of $\cA^\gep$ one has 
\begin{equation}
\forall \tau \in \cA^\gep, \quad \sum_{n\in \cL(\tau)}  (\tau_{n}-\tau_{n-1})\ge \frac{N}{2}.
\end{equation}
This means in particular than $l$ is roughly the mean length of $(\tau_{n}-\tau_{n-1})_{n\in \cL(\tau)}$ (up to a factor $2$).
For a fixed $L\in \bbN$, $L\le \gep N$, we set 
\begin{multline}
\cT(L):= \left\{ (\bt',\bt)\in ([0,N]\cap \bbZ)^{2L}  \ \big| \  \forall i\in [1,L], \  t'_i\ge t_{i-1},  \ t_{i}\ge t'_i+(2\gep)^{-1} \right\}\\
\cap \left\{ \sum_{i=1}^L (t_i-t'_i)\ge N/2\right\},
\end{multline}
which is the set of  possible locations for  $(\tau_{n-1},\tau_n)_{n\in \cL(\tau)}$.
For $(\bt',\bt)\in \cT(L)$ we set
\begin{equation} 
A_{(\bt',\bt)}:=\left\{  \tau \ \big| \ \{ (\tau_{n-1},\tau_n) \}_{n\in \cL(\tau)}=  \{ (t'_i ,t_i) \}_{i=1}^L  \ \right\}\cap \cA^{\gep}.
\end{equation}
It is the subset of $\cA^\gep$  for which the jumps of $\tau$ which are longer than $(2\gep)^{-1}$ exactly span the segments 
$(t'_i ,t_i)_{i=1}^L$  (see also Figure \ref{attprim}). 

\begin{figure}[hlt]
 \epsfxsize =11.5 cm
 \begin{center}
 \psfrag{O}{$0$}
  \psfrag{N}{$N$}
  \psfrag{v1}{$t'_1$}
    \psfrag{v2}{$t'_2$}
      \psfrag{t1}{$t_1$}
    \psfrag{t3}{$t_3$}
          \psfrag{t2=v3}{$t_2=t'_3$}
\epsfbox{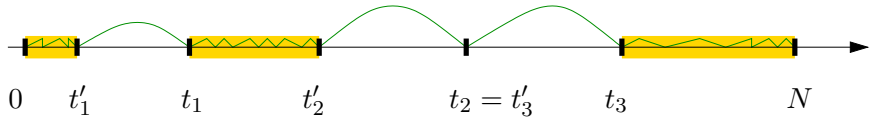}
 \end{center}
 \caption{\label{attprim}  
A schematic representation of a set $(\bt',\bt)\in \cT(3)$, and a renewal in $\tau\in A_{(\bt',\bt)}$ (in green).
The total number of jumps must be smaller than $\gep N$ and in yellow regions, the jumps of $\tau$ must be shorter than $(2\gep)^{-1}$.
As a consequence of these two conditions the total length of the yellow regions is smaller than $N/2$.}
 \end{figure}

We have 
\begin{equation}
Z_N^\go(\cA^\gep)= \sum_{L=1}^{\gep N} \sum_{(\bt',\bt)\in \cT(L)} Z_N^\go (A_{(\bt',\bt)}).
\end{equation}
In particular 

\begin{equation}\label{cramnik}
\log Z_N^\go(\cA^\gep) \le \log N + \max_{L\in\{1,\dots,\gep N\}} \left[ \log (\# \cT(L))+ \max_{(\bt',\bt) \in \cT(L) } \log  Z_N^\go (A_{(\bt',\bt)} )\right].
\end{equation}
The idea is then to use Lemma \ref{concentration} to find a good bound on the l.h.s.
A first easy step is to get an estimate on the cardinal of $\cT(L)$. Recall that here and in what follows $l:=N/L$.

\begin{lemma}\label{one}
There exists a $C$ such that for all $\gep\le 1/4$ for all $N$ sufficiently large
\begin{equation}
\# \cT(L)\le  C \exp( 2 L \log l ). 
\end{equation}
\end{lemma}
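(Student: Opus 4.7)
This is a purely combinatorial estimate on the cardinality of $\cT(L)$, so the strategy is to count configurations directly. I would immediately drop the constraint $\sum_{i=1}^L (t_i - t'_i) \ge N/2$, since it only reduces the count; only the ordering $0 \le t'_1 \le t_1 \le \cdots \le t'_L \le t_L \le N$ together with the minimum gap condition $t_i - t'_i \ge (2\gep)^{-1}$ will be used.

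The first step is to re-parametrize a configuration in terms of non-negative integer gaps. Setting $t_0 := 0$ and $t'_{L+1} := N$, I would introduce the $2L+1$ gaps
\begin{equation*}
g_i := t'_i - t_{i-1} \quad (1 \le i \le L+1), \qquad \tilde h_i := (t_i - t'_i) - \lceil (2\gep)^{-1} \rceil \quad (1 \le i \le L).
\end{equation*}
Each is a non-negative integer, and a telescoping calculation shows they sum to $N - L \lceil (2\gep)^{-1} \rceil$, which is at most $N$. Conversely, any such $(2L+1)$-tuple of non-negative integers gives back a unique configuration in $\cT(L)$. By the classical stars-and-bars identity, the number of such tuples is bounded above by $\binom{N+2L}{2L}$.

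The second step is the elementary inequality $\binom{m}{k} \le (em/k)^k$. Since $L \le \gep N \le N/4$, for $N$ large we have $N+2L \le 2N$, so
\begin{equation*}
\#\cT(L) \le \binom{N+2L}{2L} \le \Bigl(\frac{eN}{L}\Bigr)^{2L} = (el)^{2L}.
\end{equation*}
Taking logarithms gives $\log \#\cT(L) \le 2L\log l + 2L$. Because $\gep \le 1/4$ combined with $L \le \gep N$ forces $l = N/L \ge \gep^{-1} \ge 4$, we have $\log l \ge \log 4 > 1$, so the additive term $2L$ is dominated by a universal multiple of $L\log l$; absorbing that multiple into the prefactor yields the bound $\#\cT(L) \le C\exp(2L\log l)$ in the stated form.

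There is no substantive obstacle here; the only care required is to ensure that $C$ is independent of $L$, $N$, and $\gep$, which is exactly what the uniform lower bound $l \ge 4$ provides by keeping $\log l$ bounded below away from zero.
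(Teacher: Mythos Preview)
Your argument is essentially the paper's: both reduce the count to a binomial coefficient. The paper does it in one line---observing that the $2L$ integers $\{t_i\}_{i=1}^L\cup\{t'_i+1\}_{i=1}^L$ are distinct elements of $\{1,\dots,N\}$, whence $\#\cT(L)\le\binom{N}{2L}$---whereas your gap parametrization and stars-and-bars lead to the marginally larger $\binom{N+2L}{2L}$. Either route is fine.

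Your last sentence, however, does not work as written. From $\log\#\cT(L)\le 2L\log l+2L$ you cannot ``absorb $2L$ into the prefactor $C$'': that would require $2L\le\log C$ for a fixed $C$, which is impossible since $L$ ranges up to $\gep N$. What your inequality $2L\le(2/\log 4)L\log l$ actually yields is $\#\cT(L)\le\exp(cL\log l)$ for some universal $c>2$, not $C\exp(2L\log l)$. In fairness the paper's one-line Stirling claim carries the same imprecision (indeed $\binom{N}{2L}/l^{2L}$ behaves like $(e/2)^{2L}\to\infty$ for large $l$), and the weaker bound with an unspecified constant in the exponent is all that is ever used downstream, in Lemma~\ref{ohm} and at the end of the proof of Proposition~\ref{keypoint}.
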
 
\begin{proof}
The set $\{ t_i\}_{i=1}^L\cup \{ t'_i+1\}_{i=1}^L$ is a subset of 
$\{1,\dots, N\}$ with $2L$ elements. Hence 
\begin{equation}
\# \cT(L)\le \binom{N}{2L}\le C \exp( 2 L \log l ),
\end{equation}
where the last inequality just comes from Stirling formula.
\end{proof}
To use Lemma \ref{concentration} efficiently, we must also know about the expected value of $\log Z^{\go}_N (A_{(\bt',\bt)})$

\begin{lemma}\label{shanti}
For any  $(\bt',\bt) \in \cT(L)$, one has, for $\gep$ sufficiently small (depending only on $K$)

\begin{equation}
\frac{1}{N}\bbE[ \log Z^{\go}_N (A_{(\bt',\bt)})] \le \frac{1}{2}\left( \tf(\gb,h) + l^{-1} -  l^{\zeta-1}\right).
\end{equation}

\end{lemma}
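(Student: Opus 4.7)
The plan is to factorize the constrained partition function $Z^\go_N(A_{(\bt',\bt)})$ into a product of short-region partition functions and explicit weights for the $L$ prescribed long jumps, then control each factor in turn.

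First, since any $\tau \in A_{(\bt',\bt)}$ visits the prescribed points $\{t'_i, t_i\}_{i=1}^L$ and otherwise has only jumps shorter than $(2\gep)^{-1}$ on the short segments $[t_{i-1}, t'_i]$ (with the convention $t_0 := 0$ and, for simplicity, $t_L = N$; a final short segment $[t_L, N]$ is handled identically in the general case), the restricted partition function factorizes as
\[
Z^\go_N(A_{(\bt',\bt)}) = \prod_{i=1}^L K(t_i - t'_i)\, e^{\gb \go_{t_i} + h} \cdot \prod_{i=1}^L Z^{\go,<}_{[t_{i-1}, t'_i]},
\]
where $Z^{\go,<}_{[a,b]}$ denotes the partition function on $[a,b]$ restricted to renewal jumps of length less than $(2\gep)^{-1}$ (with endpoint $b$ pinned and weights $e^{\gb \go_n + h}$ for $n \in \tau \cap (a,b]$). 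Taking logarithm and expectation, and using $\bbE[\go_{t_i}] = 0$, this yields
\[
\bbE\bigl[\log Z^\go_N(A_{(\bt',\bt)})\bigr] = \sum_{i=1}^L \log K(t_i - t'_i) + Lh + \sum_{i=1}^L \bbE\bigl[\log Z^{\go,<}_{[t_{i-1}, t'_i]}\bigr].
\]

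Next, for each short segment I would use the trivial monotonicity $Z^{\go,<}_{[a,b]} \leq Z^\go_{[a,b]}$ combined with the first inequality of Lemma~\ref{finitevol} to conclude $\bbE[\log Z^{\go,<}_{[a,b]}] \leq (b-a)\, \tf(\gb,h)$. Summing over the $L$ short segments and using that $(\bt',\bt) \in \cT(L)$ forces $S := \sum_i (t_i - t'_i) \geq N/2$, the total short-segment contribution is at most $(N - S)\, \tf(\gb,h) \leq \tfrac{N}{2}\, \tf(\gb,h)$, which accounts for the $\tfrac{1}{2}\tf(\gb,h)$ on the right-hand side of the lemma.

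For the long-jump contributions I would invoke the stretched-exponential tail of $K$: taking $\gep$ small enough (depending only on $K$) there is a constant $C_K$ such that $\log K(n) \leq -n^\zeta + C_K$ for every $n \geq (2\gep)^{-1}$. This gives $\sum_i \log K(t_i - t'_i) \leq -\sum_i (t_i - t'_i)^\zeta + L C_K$, and the task reduces to establishing a lower bound of the form $\sum_i (t_i - t'_i)^\zeta \geq \tfrac{1}{2} L l^\zeta - O(L)$; using $N = L l$, dividing by $N$ converts this into the $-\tfrac{1}{2} l^{\zeta - 1}$ correction, with the residual $O(1/l)$ error coming from $L(C_K + h)/N$ absorbed into the $\tfrac{1}{2l}$ slack.

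The hard part will be this last lower bound. Since $x \mapsto x^\zeta$ is concave on $(0,\infty)$ for $\zeta \in (0,1)$, Jensen gives $\sum x_i^\zeta \leq L \bar{x}^\zeta$, the wrong direction, and pure subadditivity $\sum x_i^\zeta \geq S^\zeta = O(N^\zeta)$ is far too weak when $l \ll N$. Instead the bound must combine the per-jump floor $x_i \geq (2\gep)^{-1}$, which yields $\sum (t_i - t'_i)^\zeta \geq L (2\gep)^{-\zeta}$ and is sharp in the uniform regime where all $x_i$ cluster near $(2\gep)^{-1}$, with the mass constraint $\bar{x} \geq l/2$, which forces extra contribution from the term(s) carrying the excess mass. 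Choosing $\gep$ small enough so that $(2\gep)^{-\zeta}$ dominates $l^\zeta/2$ on the relevant portion of $l \in [\gep^{-1}, \infty)$ and absorbs the constant $C_K$ closes the argument.
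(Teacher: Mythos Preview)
Your factorization into short-segment partition functions and long-jump weights, and the use of Lemma~\ref{finitevol} to bound each short segment by $\tf(\gb,h)$ times its length, match the paper's argument exactly. You are also right that the whole difficulty sits in the long-jump term, and right that Jensen's inequality for the concave map $x\mapsto x^\zeta$ yields $\sum_i x_i^\zeta \le L\,\bar x^{\,\zeta}$, the wrong direction. Curiously, the paper's own proof invokes ``Jensen's inequality for the function $x\mapsto x^\zeta$'' at exactly this point to assert $\sum_i (t_i-t'_i)^\zeta \ge 2^{-\zeta} L l^\zeta$, which is the reverse of what concavity provides.

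Your proposed repair does not close the gap either, and in fact the target inequality $\sum_i (t_i-t'_i)^\zeta \ge \tfrac12 Ll^\zeta - O(L)$ is \emph{false} on $\cT(L)$ when $L\ge 3$ is held fixed and $N\to\infty$. Put $L-1$ of the $x_i:=t_i-t'_i$ at the floor $(2\gep)^{-1}$ and the remaining one at $N/2-(L-1)(2\gep)^{-1}$; this configuration lies in $\cT(L)$ and gives $\sum_i x_i^\zeta = 2^{-\zeta}N^\zeta(1+o(1))$, whereas $\tfrac12 Ll^\zeta=\tfrac12 L^{1-\zeta}N^\zeta$. Since $(2/L)^{1-\zeta}<1$ for every $L\ge 3$ and every $\zeta\in(0,1)$, the target exceeds the sum by a factor bounded away from $1$, and no $O(L)$ slack rescues it. Your closing claim that ``$(2\gep)^{-\zeta}$ dominates $l^\zeta/2$ on the relevant portion of $l\in[\gep^{-1},\infty)$'' therefore cannot hold: once $\gep$ is fixed, $(2\gep)^{-\zeta}$ is a constant while $l=N/L$ may be as large as $N/3$. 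The lemma as stated thus appears to be incorrect, and both your attempt and the paper's founder on the same obstruction.
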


\begin{proof}
One has (recall \eqref{znab})

\begin{equation}
Z^{\go}_N (A_{(\bt',\bt)})\le \left[ \prod_{i=1}^L Z_{[t_{i-1},t'_{i}]}K(t_i-t'_i)\right] Z_{[t_L,N]},
\end{equation}
where we take the convention $t_0=0$.

Hence
\begin{equation} \label{sketletor}
\bbE\left[\log Z^{\go}_N (A_{(\bt',\bt)}\right]\le  \sum_{i=1}^L \bbE\left[\log Z_{[t_{i-1},t'_{i}]} \right] + \bbE\left[ \log  Z_{[t_L,N]}\right]
+\sum_{i=1}^L \log K(t_i-t'_i).
\end{equation} 
One has from Lemma \ref{finitevol}
for all $i>1$
$$\bbE Z_{[t_{i-1},t'_{i}]}\le (t'_{i}-t_{i-1})\tf(\gb, h) + h,$$ 
the extra $h$ term is there because the definition of $Z_[a,b]$ \eqref{defznab} takes into-account the environment at the starting point (

 and the fact that $h\le 1$, it follows that
\begin{multline}
 \sum_{i=1}^L \bbE\left[\log Z_{[t_{i-1},t'_{i}] }\right] + \bbE\left[ \log  Z_{[t_L,N]}\right] \\
 \le \left(\sum_{i=1}^L(t'_i-t_{i-1})+ (N-t_L)\right)\tf+ L h
 \le N \tf/2+ L.
\end{multline}
is not exactly the partition function (

Regarding the last term in \eqref{sketletor}, using Jensen's inequality for the function $x\mapsto x^{\zeta}$, we have, choosing $\delta>0$ sufficiently small, for $\gep$ sufficiently small 
\begin{equation}
-\sum_{i=1}^L \log K(t_i-t'_i)\ge (1-\delta) \sum_{i=1}^L (t_i-t'_i)^{\zeta}\ge (1-\delta) 2^{-\zeta} L l^{\zeta}\ge \frac{1}{2}L l^{\zeta},
\end{equation}
which ends the proof.
\end{proof}

\begin{lemma}\label{ohm}
There exists a constant  $C$  such that for $N$ sufficiently large, 
for all $L\in \{1,\dots,\gep N\}$,
\begin{equation}
\bbP\left(  \max_{(\bt',\bt) \in \cT(L)} \Big( \log Z^{\go}_N (A_{(\bt',\bt)})- \bbE[ \log Z^{\go}_N (A_{(\bt',\bt)})]\Big)\ge  C\gb N\sqrt{\frac{\gep \log l}{l}} \right)\le 
 \frac{1}{N^3}.
\end{equation}
\end{lemma}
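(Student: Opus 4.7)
\textbf{Proof plan for Lemma \ref{ohm}.} The strategy is a standard union-bound/concentration argument: first apply Lemma \ref{concentration} to a single event $A_{(\bt',\bt)}$, then take the union bound over $\cT(L)$ using the cardinality estimate from Lemma \ref{one}.

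First I would fix $L\in\{1,\dots,\gep N\}$ and $(\bt',\bt)\in\cT(L)$, and invoke Lemma \ref{concentration} (which applies since $A_{(\bt',\bt)}\subset\cA^\gep$) with $t=C\gb N\sqrt{\gep\log l/l}$. Squaring and dividing by $C_2\gb^2 N\gep$ collapses the tail into
\begin{equation}
\bbP\!\left[\log Z^\go_N(A_{(\bt',\bt)})-\bbE\log Z^\go_N(A_{(\bt',\bt)})\geq t\right]\leq C_1\exp\!\left(-\tfrac{C^2}{C_2}L\log l\right),
\end{equation}
where I used $N/l=L$ in the exponent. Then Lemma \ref{one} yields $\#\cT(L)\leq C\exp(2L\log l)$, so the union bound over $(\bt',\bt)\in\cT(L)$ gives
\begin{equation}
\bbP\!\left[\max_{(\bt',\bt)\in\cT(L)}\!\left(\log Z^\go_N(A_{(\bt',\bt)})-\bbE\log Z^\go_N(A_{(\bt',\bt)})\right)\geq t\right]\leq C_1 C\exp\!\left(-\left(\tfrac{C^2}{C_2}-2\right)L\log l\right).
\end{equation}

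The remaining task is to show that the exponent is at least $3\log N+O(1)$ uniformly in $L\in\{1,\dots,\gep N\}$, which is where one must be a little careful. Writing $L\log l=L\log(N/L)$, the function $L\mapsto L\log(N/L)$ is increasing on $[1,N/e]$ and decreasing thereafter; since we may assume (by shrinking $\gep_0$ further if necessary) that $\gep\leq 1/e$, the minimum over $L\in\{1,\dots,\gep N\}$ is attained at $L=1$ and equals $\log N$. Therefore $L\log l\geq\log N$ throughout the range, and choosing $C$ large enough that $C^2/C_2-2\geq 4$ gives a bound of $C_1 C\cdot N^{-4}$, which is at most $N^{-3}$ for $N$ large enough.

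The main (minor) obstacle, besides verifying the Lipschitz-norm computation in Lemma \ref{concentration} applies to $A_{(\bt',\bt)}$ (immediate, since $A_{(\bt',\bt)}\subset\cA^\gep$), is simply to identify where the minimum of $L\log(N/L)$ over the admissible range is located. Once one observes that the restriction $\gep\leq 1/e$ places the whole interval $\{1,\dots,\gep N\}$ below the critical point $N/e$ of this function, the monotonicity check makes everything uniform and the desired $1/N^3$ bound follows immediately.
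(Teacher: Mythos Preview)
Your proof is correct and follows exactly the same route as the paper: apply the concentration bound of Lemma~\ref{concentration} to each $A_{(\bt',\bt)}$, union-bound over $\cT(L)$ using Lemma~\ref{one}, and then choose $C$ large. The paper's proof is terser---it simply writes the union bound and says ``one can conclude provided that $C$ is chosen sufficiently large''---whereas you make explicit the check that $L\log l=L\log(N/L)\ge \log N$ on the range $\{1,\dots,\gep N\}$; incidentally, the constraint $\gep\le 1/e$ you impose is already implied by the hypothesis $\gep\le 1/4$ of Lemma~\ref{one}, so no further shrinking of $\gep_0$ is needed.
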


\begin{proof}
From Lemma \ref{concentration} and  a standard union bounds one has for any $u$ 

\begin{equation}
\bbP\left(  \max_{(\bt',\bt) \in \cT(L)} \left( \log Z^{\go}_N (A_{(\bt',\bt)})- \bbE[ \log Z^{\go}_N (A_{(\bt',\bt)})]\right) \ge u\right)\le  (\#\cT)C_1  \exp\left(-\frac{u^2}{C_2\gb^2 N\gep}\right)
\end{equation}
Using Lemma \ref{one} for the value of $u:=C\gb N\sqrt{\frac{\gep \log l}{l}}$ one can conclude provided that $C$ is chosen sufficiently large.
\end{proof}

\begin{proof}[Proof of Proposition \ref{keypoint}]
Using Lemma \ref{ohm} and Lemma \ref{shanti},
one has almost surely for all large $N$, for all $L\le \gep N$
\begin{equation}
 \frac{1}{N}\max_{(\bt',\bt) \in \cT(L)}  \log Z^{\go}_N (A_{(\bt',\bt)})\le \frac{1}{2}\tf(\gb,h)+l^{-1}+C\gb \sqrt{\frac{\gep \log l}{l}}-\frac{1}{2} l^{\zeta-1}.
\end{equation}
Combining this with \eqref{cramnik} we obtain
\begin{equation}
\frac 1 N\log Z_N^\go(\cA^\gep)\le \frac{\log N} N + \max_{l\ge \gep^{-1}} \left( \frac{\log \# \cT(L)}{N}+\frac{1}{2}\tf(\gb,h)+ l^{-1}+  C\gb  \sqrt{\frac{\gep \log l}{l}}-\frac{1}{2} l^{\zeta-1}\right).
\end{equation}
The terms $\frac{\log N} N$ and  $\frac{\log \# \cT(L)}{N}$ can be neglected if $l$ is sufficiently large (i.e. $\gep$ is sufficiently small) and $l^{\zeta-1}/2$ is replaced by $l^{\zeta-1}/4$.
\end{proof}

\section{Proof of Theorem \ref{mainres2}: rounding for $\zeta< 1/2$}\label{seclwb}

The idea to find an upper bound on the free-energy is close to the one in \cite{cf:GT05}. The main difference is that here, we must combine the argument with the finite volume criterion given by Lemma \ref{finitevol} to get a result.
We use the fact that $\go$ is Gaussian in  the following  way:

\begin{lemma}\label{gaussianlemma}
For any $N$  if $\go$ are IID Gaussian variables then the 
sequence 
$$\left(\go_x-\frac{1}{N}\sum_{n=1}^N \go_n\right)_{x=1}^N$$ 
is independent of  
$\sum_{n=1}^N \go_n$. 
\end{lemma}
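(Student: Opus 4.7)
The plan is to use the classical Gaussian fact that uncorrelated components of a jointly Gaussian vector are independent. Since $\go_1,\ldots,\go_N$ are IID Gaussian, the vector $(\go_1,\ldots,\go_N)$ is Gaussian. The quantities under consideration, namely $S:=\sum_{n=1}^N \go_n$ and the centered components $Y_x := \go_x - \frac{1}{N}S$ for $x=1,\ldots,N$, are all linear functionals of this Gaussian vector. Therefore the joint vector $(Y_1,\ldots,Y_N,S)\in \bbR^{N+1}$ is itself jointly Gaussian.

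Next I would verify that each $Y_x$ is uncorrelated with $S$. By bilinearity of the covariance and using $\Var(\go_n)=1$ together with the independence of the $\go_n$, one computes
\begin{equation}
\mathrm{Cov}(Y_x,S)=\mathrm{Cov}\!\left(\go_x,\sum_{n=1}^N\go_n\right)-\frac{1}{N}\mathrm{Cov}\!\left(\sum_{n=1}^N\go_n,\sum_{n=1}^N\go_n\right)=1-\frac{N}{N}=0.
\end{equation}
Because the pair $(Y_x,S)$ is Gaussian with zero covariance, $Y_x$ is independent of $S$ for each fixed $x$. To get independence of the whole vector $(Y_1,\ldots,Y_N)$ from $S$ (not just of each coordinate separately), I would invoke the stronger statement for jointly Gaussian random vectors: if $(U,V)$ is jointly Gaussian and every coordinate of $U$ is uncorrelated with every coordinate of $V$, then $U$ and $V$ are independent. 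This is standard and follows immediately from the factorization of the characteristic function of $(U,V)$ once the cross-covariance block vanishes.

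There is no real obstacle here; the only subtlety is the distinction between componentwise uncorrelatedness and independence of the full vectors, which is precisely what Gaussianity buys us. I would therefore present the proof as: (i) jointly Gaussian by linearity; (ii) cross-covariance $\mathrm{Cov}(Y_x,S)=0$ for each $x$; (iii) conclude independence of the vector $(Y_x)_{x=1}^N$ from $S$ by the Gaussian independence-from-uncorrelatedness criterion.
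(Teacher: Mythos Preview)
Your argument is correct and is exactly the standard proof of this classical fact. The paper itself does not give a proof of this lemma; it is stated as a known property of Gaussian vectors and used directly. Your write-up (joint Gaussianity by linearity, vanishing cross-covariance, independence from the block-diagonal covariance structure) is the canonical justification and would serve perfectly well as a filled-in proof.
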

With this observation, we see that changing the value of $h$ by an amount $\delta$ is in fact equivalent to changing the empirical mean of the $\go$ by an amount $\delta \gb^{-1}$.

\medskip

In a first step we try to control the expectation of the free-energy for a typical value of $\sum_{n=1}^N \go_n$.

\begin{proposition}\label{bound}
There exists a constant $C$ such that 
for all $N$ sufficiently large  and all $u$,
\begin{equation}\label{dabound}
\bbE\left[\log Z^{\gb,h_c(\gb),\go}_N \ \big| \ \sum_{n=1}^N  \go_n\ge u \sqrt{N} \right]
\le C N^{\zeta}(1+|u|^{\zeta})e^{\zeta u^2/2}+\gb^2.
\end{equation}
\end{proposition}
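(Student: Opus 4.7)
The idea is to convert the conditioning on $\{\sum_n\go_n\ge u\sqrt N\}$ into an effective upward shift of the parameter $h$ via the Gaussian decomposition of Lemma~\ref{gaussianlemma}, reducing the estimate to a finite-volume bound at a slightly super-critical $h$, which is then controlled by Lemma~\ref{finitevol}, by concentration (Lemma~\ref{concentration}), and by a careful optimization against the Gaussian density of $S:=\sum_{n=1}^N\go_n$.

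\textbf{Step 1 (Gaussian tilt).} Setting $\tilde\go_n:=\go_n-S/N$, Lemma~\ref{gaussianlemma} gives $\tilde\go\perp S$, and substituting into \eqref{eq:Znh} yields the identity
\[
\log Z^{\gb,h_c(\gb),\go}_N=\log Z^{\gb,h_c(\gb)+\gb S/N,\tilde\go}_N,
\]
so that $\bbE[\log Z^{\gb,h_c(\gb),\go}_N\mid S]=F(S)$ with $F(s):=\bbE_{\tilde\go}[\log Z^{\gb,h_c(\gb)+\gb s/N,\tilde\go}_N]$. To compare $\tilde\go$ with an IID standard Gaussian vector $\go^\star$, couple them via $\go^\star_n:=\tilde\go_n+\xi/\sqrt N$ with $\xi\sim\mathcal N(0,1)$ independent of $\tilde\go$; a direct check shows that $\go^\star$ is IID standard Gaussian and $\log Z^{\gb,h,\go^\star}_N=\log Z^{\gb,h+\gb\xi/\sqrt N,\tilde\go}_N$. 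Jensen in $\xi$ (convexity of $h\mapsto\log Z$) and Lemma~\ref{finitevol} then give
\[
F(s)\le \bbE\bigl[\log Z^{\gb,h_c(\gb)+\gb s/N,\go^\star}_N\bigr]\le N\,\tf\bigl(\gb,h_c(\gb)+\gb s/N\bigr),
\]
with an $O(\gb^2)$ correction from the reverse Jensen gap (a second order Taylor expansion in $\xi$) absorbed into the additive $\gb^2$ in the statement.

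\textbf{Step 2 (avoiding circularity and preliminary integration).} Since Proposition~\ref{bound} is to be used to \emph{prove} the smoothing of $\tf$, one cannot invoke any nontrivial bound on $\tf(\gb,h_c(\gb)+v)$. The trivial Lipschitz estimate $\tf(\gb,h_c(\gb)+v)\le v$ (from $\partial_h\tf\le 1$), together with the absolute lower bound $\log Z^{\gb,h_c(\gb),\go}_N\ge\log K(N)\ge -CN^\zeta$ (from the single-jump configuration $0\to N$) and the finite-volume upper bound $\bbE[\log Z^{\gb,h_c(\gb),\go}_N]\le 0$, give the basic estimate
\[
\bbE\bigl[\log Z^{\gb,h_c(\gb),\go}_N\,\ind_{S\ge u\sqrt N}\bigr]\le CN^\zeta.
\]
Dividing by $\bbP(S\ge u\sqrt N)$, which is bounded below by a constant times $(1+|u|)^{-1}e^{-u^2/2}$, produces a preliminary bound of the form $CN^\zeta(1+|u|)e^{u^2/2}$.

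\textbf{Main obstacle.} The delicate step is improving this preliminary bound to the stated exponent $\zeta u^2/2$ and polynomial prefactor $(1+|u|^\zeta)$. This requires a finer decomposition of $\log Z^{\gb,h_c(\gb),\go}_N$ using the concentration inequality of Lemma~\ref{concentration} at a scale chosen to balance two competing terms: the $N^\zeta$ scale of the typical behavior (from Lemma~\ref{finitevol}) and the $\gb\sqrt N$ scale of the upper tail (from Lemma~\ref{concentration}). A H\"older-type inequality with exponent $1/\zeta$ distributes the Gaussian cost $u^2/2$ in the ratio $\zeta:(1-\zeta)$ between these two contributions, producing the factor $e^{\zeta u^2/2}$ upon optimization. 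Obtaining this balance with the precise exponent $\zeta$ dictated by the stretched-exponential tail is the core technical challenge of the proof.
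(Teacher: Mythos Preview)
Your proposal is incomplete: Steps~1 and~2 deliver only the preliminary bound $CN^{\zeta}(1+|u|)e^{u^{2}/2}$, and the ``Main obstacle'' paragraph describes what must be improved without supplying an argument. The exponent $e^{u^{2}/2}$ is fatal for the downstream application: in the proof of Theorem~\ref{mainres2} one integrates the bound against $\exp(-(u-1)^{2}/2)$, and this converges only because the true bound carries $e^{\zeta u^{2}/2}$ with $\zeta<1$. Your suggested route via H\"older and concentration is too vague to evaluate; note that Lemma~\ref{concentration} applies to restricted partition functions $Z_{N}(A)$ with $A\subset\cA^{\gep}$, while the Lipschitz constant of the full $\log Z_{N}$ is $\gb\sqrt{N}$, so concentration alone gives no improvement over the preliminary bound. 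There is no visible mechanism in your sketch by which the parameter $\zeta$ would enter the Gaussian exponent.

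The paper's argument is entirely different and avoids all of this. It does not tilt the Gaussian at fixed $N$; instead it \emph{embeds} the conditioned block into a larger system of size $MN$ at the critical point, with $M:=u\,e^{u^{2}/2}$. By the Gaussian tail estimate, among the $M$ independent blocks of length $N$ one finds, with probability bounded away from zero, a block whose environment sum exceeds $u\sqrt{N}$. The strategy that jumps to this block, collects its partition function, and jumps back yields
\[
\bbE[\log Z^{\go}_{MN}]\ \ge\ c'\,\bbE\bigl[\log Z^{\gb,h_c(\gb),\go}_{N}\ \big|\ \textstyle\sum_{n}\go_{n}\ge u\sqrt{N}\bigr]\;-\;2(MN)^{\zeta}\;-\;\gb^{2}.
\]
Since the left side is $\le 0$ by Lemma~\ref{finitevol} at criticality, the desired inequality follows immediately, and the factor $e^{\zeta u^{2}/2}$ appears for free from $(MN)^{\zeta}=N^{\zeta}u^{\zeta}e^{\zeta u^{2}/2}$. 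Thus the $\zeta$ in the exponent is a direct consequence of the stretched-exponential jump cost $K(\cdot)$, not of any interpolation or concentration argument.
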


This will be done using the finite volume criterion of Lemma \ref{finitevol}: if \eqref{dabound} does not hold, one can find a strategy which gives a positive free-energy for $h=h_c(\gb)$ and hence yields a contradiction.
Then the idea is to integrate this bound over all values of $u$ to obtain a bound for $\bbE\left[\log Z^{\gb,h,\go}_N\right]$.
Of course the bound will be a good one only if $N$ is wisely chosen. We can finally conclude using the finite volume criterion Lemma \ref{finitevol}.

\begin{proof}[Proof of Theorem \ref{mainres2}]

Now for $h=h_c(\gb)+v$ one sets $N:=\gb^2 v^{-2}$ (assuming that we have chosen $v$ such that $N$ is an integer).
One has 

\begin{equation}\begin{split}
\bbE\left[ \log Z^{\gb,h,\go}_N\right]&= \int \frac{1}{\sqrt{2\pi}}\exp\left(-u^2/2\right)\bbE\left[\log Z^{\gb,h,\go}_N \ | \ \sum_{n=1}^N  \go_n= u \sqrt{N} \right]\dd u\\
&=\int \frac{1}{\sqrt{2\pi}}\exp\left(-\frac{(u-\gb^{-1} v\sqrt{N})^2}{2}\right)\bbE\left[\log Z^{\gb,h_c(\gb),\go}_N \ | \ \sum_{n=1}^N  \go_n= u \sqrt{N} \right]\dd u.
\end{split}
\end{equation}

Using Proposition \ref{bound} we have the following inequality provided that $v$ is sufficiently small (in which case the $\gb^2$ can be neglected)

\begin{equation}
\bbE\left[ \log Z^{\gb,h,\go}_N\right]\le C N^{\zeta} \int \frac{1}{\sqrt{2\pi}}(1+|u|^{\zeta}) \exp\left(\frac{\zeta u^2-(u-1)^2}{2}\right)\dd u\le C' N^{\zeta}.
\end{equation}
Hence, using Lemma \ref{finitevol}, we obtain 

\begin{equation}
\tf(\gb,h)\le C'' N^{\zeta-1}= C'' (v\gb^{-1})^{2(1-\zeta)}.
\end{equation}

\end{proof}
\begin{proof}[Proof of Proposition \ref{bound}]
One can assume $u\ge 1$ without loss of generality.
Set $$M:= u\exp\left(u^2/2\right).$$
Let $X_0$ be the smallest integer such that
\begin{equation}
 \sum_{n=X_0 N+1}^{(X_0+1)N}  \go_x\ge u \sqrt{N}.
\end{equation}
Then we obtain a lower bound on $Z_{NM}$ by deciding to visit the stretch $[X_0N,(X_0+1)N]$ if $X_0\le M- 2$ 
and to do only a long excursion in the other case (recall \eqref{znab}) (see Figure \ref{factorx}):

\begin{figure}[hlt]
 \epsfxsize =11.5 cm
 \begin{center}
 \psfrag{O}{$0$}
  \psfrag{NM}{$NM$}
  \psfrag{X0N}{\tiny {$X_0N$}}
    \psfrag{X01N}{\tiny {$(X_0+1)N$}}
\epsfbox{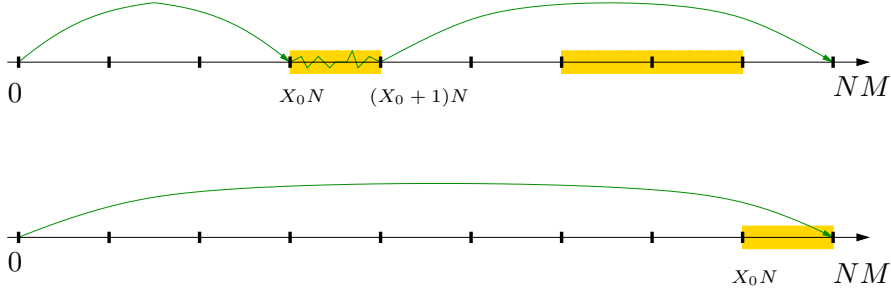}
 \end{center}
 \caption{\label{factorx}  
Here we present our strategy to obtain a lower bound on the partition function $Z_{NM}$. The yellow segments are those which are such that the empirical mean of $\go$ is larger than $u N^{-1/2}$.
By definition $[X_0N,(X_0+1)N]$ is the first of these segments. We allow pinning only on the segment  $[X_0N,(X_0+1)N]$ and only if $X_0\le M-2$. }
 \end{figure}

\begin{equation}
 Z^\go_{MN}\ge \begin{cases} K(X_0N)  Z^\go_{[X_0N,(X_0+1)N]}K((M-X_0+1)N)e^{\gb \go_{NM}+h_c(\gb)}, &\text{ if } X_0\le (M-2), \\
 K(MN)e^{\gb \go_{NM}+h_c(\gb)} \quad &\text{ if } X_0\ge M-2.
\end{cases}
\end{equation}
Taking the expectation one obtains, by translation invariance

\begin{multline}
\bbE \left[\log \left( K(X_0N) \log Z^\go_{[X_0N,(X_0+1)N]}K((M-X_0+1)N)\right) \ | \ X_0\le (M-2) \right]\\
\ge -2(MN)^{\zeta}+h_c(\gb)+\bbE\left[\log Z^{\gb,h_c(\gb),\go}_N \ | \ \sum_{n=1}^N  \go_x\ge u \sqrt{N} \right].
\end{multline}
We also have (as $\go_{MN}$ is independent of the event $\{X_0\le M-2\}$ its conditional mean is zero)
\begin{equation}
 \bbE\left[ \log \left[K(MN)e^{\gb \go_{NM}+h_c(\gb)}\right] \ | \  X_0\ge M-2 \right]= \log K(MN)+h_c(\gb).
\end{equation}
And hence (recall that $h_c(\gb)\ge -\gb^2/2$ for Gaussian environments),

\begin{equation}
\bbE[\log Z^\go_{MN}]\ge \bbE\left[\log Z^{\gb,h_c(\gb),\go}_N \ | \ \sum_{n=1}^N  \go_x\ge u \sqrt{N} \right] \bbP\left[ X_0\le M-2 \right]-2 (MN)^{\zeta}-\gb^2.
\end{equation}
By standard estimates on Gaussian tails, there exists a constant $c>0$ such that  
$$\forall u >1, \quad \bbP\left[ \sum_{n=1}^N  \go_x\ge u \sqrt{N} \right]\ge \frac c u e^{-u^2/2},$$  and hence, using the definition of $M$
we have
$$\bbP\left[ X_0\le M-2 \right]>c',$$ 
for some positive constant $c'$.
This implies  (recall Lemma \ref{finitevol} and that $\tf(\gb,h_c(\gb))=0$) that there exists $c''>0$ such that

\begin{equation}
0\ge \bbE[\log Z^\go_{MN}]\ge c' \left( \bbE\left[\log Z^{\gb,h_c(\gb),\go}_N \ | \ \sum_{n=1}^N  \go_x\ge u \sqrt{N} \right]\right) -c'' u^{\zeta} e^{\zeta u^2/2} N^\zeta-\gb^2.
\end{equation}
The above inequality is in fact only valid if one assumes that 
$$
 \bbE\left[\log Z^{\gb,h_c(\gb),\go}_N \ | \ \sum_{n=1}^N  \go_x\ge u \sqrt{N} \right]\ge 0,$$
 but if this is not the case there is nothing to prove.
\end{proof}

\section{Proof of Theorem \ref{mainres3}: rounding for $\zeta= 1/2$}

The case for $\zeta=1/2$ is a bit more complicated. 
Assume that 
\begin{equation}\label{defc0}
\lim_{h\to h_c(\gb)+} \partial_h\tf(\gb,h)=c_0>0,
\end{equation}
and let us derive a contradiction.
Fist,  we prove that the contact fraction at the critical point, if well defined, cannot be equal to $c_0$ as there is always a positive probability for 
the polymer to have a very small contact fraction. 

\begin{lemma}\label{criticool}
The following three statements hold
\begin{itemize}
\item [(i)]
For all $\gep>0$, one has 
\begin{equation}
\limsup_{N\to \infty} \bbE\left[\bP^{\gb,h_c(\gb),\go}_N\left(\cB^\gep \right) \right]<1.
\end{equation} 
\item [(ii)]For any $u>c_0$ one has 
\begin{equation}
\lim_{N\to \infty} \bbE\left[\bP^{\gb,h_c(\gb),\go}_N\left(\cB^u \right) \right]=0.
\end{equation} 
\item [(iii)]One has 
\begin{equation}
\limsup_{N\to \infty} \frac 1 N \bbE\left[\bE^{\gb,h_c(\gb),\go}_N\left(\sum_{n=1}^N\delta_N \right) \right]<c_0.
\end{equation} 
\end{itemize}
\end{lemma}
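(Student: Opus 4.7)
The three parts are interrelated. My plan is to prove (ii) first by a tilt/change-of-measure argument, then to deduce (iii) from (i) and (ii) by decomposing the expected contact fraction, leaving (i) as the main technical step.

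For (ii), fix $u>c_0$ and choose $\delta>0$ such that $\tf(\gb,h_c(\gb)+\delta)<\delta u$; this is possible because \eqref{defc0} together with convexity of $\tf$ gives $\tf(\gb,h_c(\gb)+\delta)/\delta\to c_0$ as $\delta\to 0^+$. Factoring the Boltzmann weight as $e^{\sum(\gb\go_n+h_c(\gb))\delta_n}=e^{-\delta\sum\delta_n}\,e^{\sum(\gb\go_n+h_c(\gb)+\delta)\delta_n}$ and using $\sum\delta_n>uN$ on $\cB^u$ yields
\[
Z^{\gb,h_c(\gb),\go}_N(\cB^u)\le e^{-\delta uN}\,Z^{\gb,h_c(\gb)+\delta,\go}_N,
\]
so $\tfrac{1}{N}\bE[\log Z^{h_c,\go}_N(\cB^u)]\le -c+o(1)$ for some $c>0$. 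Combined with $\tfrac{1}{N}\bE[\log Z^{h_c,\go}_N]\to 0$ (from Lemma \ref{finitevol}) and standard $O(\sqrt N)$ concentration of the convex $\gb\sqrt N$-Lipschitz map $\go\mapsto\log Z^\go_N$, this yields $\bP^{h_c,\go}_N(\cB^u)\le e^{-cN/2}$ with high probability, and bounded convergence gives (ii).

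For (iii) assuming (i) and (ii), fix $0<\gep<c_0<u$ and decompose
\[
\tfrac{1}{N}\bE^{h_c,\go}_N\Bigl[\sum_{n=1}^N\delta_n\Bigr]\le \gep\,\bP^{h_c,\go}_N(\cA^\gep)+u\,\bP^{h_c,\go}_N(\cA^u\setminus\cA^\gep)+\bP^{h_c,\go}_N(\cB^u)\le u-(u-\gep)\bP^{h_c,\go}_N(\cA^\gep)+\bP^{h_c,\go}_N(\cB^u).
\]
Averaging over $\go$, using (i) to obtain $\bE[\bP^{h_c}_N(\cA^\gep)]\ge\delta_\gep>0$ eventually and (ii) to drive $\bE[\bP^{h_c}_N(\cB^u)]\to 0$, then letting $u\to c_0^+$, gives $\limsup_N\tfrac{1}{N}\bE\bE^{h_c,\go}_N[\sum\delta_n]\le c_0-(c_0-\gep)\delta_\gep<c_0$.

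The main obstacle is (i) for $\gep\le c_0$, since the tilt argument of the first step is only conclusive for $\gep>c_0$. I would produce sparse configurations inside $\cA^\gep$ carrying a non-negligible share of $Z^{h_c,\go}_N$. The naive single-jump strategy $\tau=\{0,N\}$ gives $Z^{h_c,\go}_N(\cA^\gep)\ge K(N)e^{\gb\go_N+h_c(\gb)}\sim e^{-N^\zeta+O(1)}$, but Lemma \ref{finitevol} only controls $\bE[\log Z^{h_c,\go}_N]$ up to an error $O(N^\zeta)$, so the resulting ratio is not obviously bounded below. The natural refinement is to sum over configurations $\tau=\{0,a_1,\dots,a_k,N\}$ with $k\le\gep N$ sparse contacts and optimize over $k$; the log-convexity hypothesis on $K$ enters crucially because by Jensen on the convex function $-\log K$ it makes $\prod K(\text{gaps})$ maximal for equal gaps, so the lower bound on $\bE[\log Z^{h_c,\go}_N(\cA^\gep)]$ can be made to match $\bE[\log Z^{h_c,\go}_N]$ up to lower order. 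The FKG inequality granted by the same hypothesis is then used to upgrade the expectation comparison into a high-probability one: the $\go$-realizations inflating $Z^{h_c,\go}_N$ also inflate the sparse contribution $Z^{h_c,\go}_N(\cA^\gep)$, so the ratio remains bounded below with uniformly positive probability. Executing this balance between entropy (choice of the $a_i$'s) and energy (penalty on the gaps), and combining it cleanly with FKG, is where the real work lies.
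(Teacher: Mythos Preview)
Your arguments for (ii) and (iii) are correct and close in spirit to the paper's: for (ii) the paper instead lower-bounds $Z^{\gb,h,\go}_N\ge\delta\, e^{Nu(h-h_c)}K(N)e^{\gb\go_N+h_c}$ directly to obtain $\tf(\gb,h)\ge u(h-h_c(\gb))$, contradicting \eqref{defc0}; for (iii) it uses the layer-cake identity $\bE^{h_c,\go}_N[\tfrac1N\sum_n\delta_n]=\int_0^1\bP^{h_c,\go}_N(\cB^u)\,du$ and averages over $\go$.

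Part (i) is where your proposal has a genuine gap. First a sign issue: the hypothesis is that $\log K$ is \emph{convex} (consistent with $\log K(n)\sim -n^\zeta$, since $(-n^\zeta)''=\zeta(1-\zeta)n^{\zeta-2}>0$), so $-\log K$ is concave and equal gaps \emph{minimize} $\prod_i K(g_i)$, not maximize it. More fundamentally, even with the correct bound $\prod_i K(g_i)\ge K(N/(k{+}1))^{k+1}$ and summing over all $k$-point subsets with $k\le\gep N$, the resulting lower bound on $\bbE[\log Z^{h_c,\go}_N(\cA^\gep)]$ is only of order $-\gep^{1-\zeta}N$, which for $\zeta=1/2$ is $-\sqrt\gep\,N$, far below the $O(\sqrt N)$ scale allowed for $\bbE[\log Z^{h_c,\go}_N]$ by Lemma~\ref{finitevol}; neither the entropy $\log\binom{N}{k}$ nor the optimal placement of contacts recovers this deficit. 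The FKG inequality you invoke is, in this paper, a statement about $\tau$-correlations under $\bP^{\gb,h,\go}_N$ for \emph{fixed} $\go$ (it is used only in Lemma~\ref{finitevol2}), and it does not control the ratio $Z^{h_c,\go}_N(\cA^\gep)/Z^{h_c,\go}_N$ across environments. The paper's argument for (i) is entirely different and hinges on the Gaussian hypothesis through Lemma~\ref{gaussianlemma}: assuming $\bbE[\bP^{h_c,\go}_N(\cB^\gep)]\to 1$, monotonicity in the empirical mean forces the same under conditioning on $\sum_{n<N}\go_n=u\sqrt{N{-}1}$ for every $u$, in particular for the very unfavorable $u=-10\gep^{-1}$; there one still has $Z^{h_c,\go}_N(\cB^\gep)\ge K(N)e^{\gb\go_N+h_c}$ with conditional probability $\ge 1/2$. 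Shifting the conditioning from $u$ to a generic $v$ adds $(v-u)/\sqrt{N{-}1}$ to each $\go_n$, and since paths in $\cB^\gep$ carry $\ge\gep N$ contacts this boosts $Z^{h_c,\go}_N(\cB^\gep)$ by at least $e^{\gep(v-u)\sqrt{N-1}}$. Integrating over $v$ gives $\bbE[\log Z^{h_c,\go}_N]\ge\log K(N)+5\sqrt{N{-}1}+h_c(\gb)$, and for $\zeta=1/2$ the coefficient $5$ (dictated by the choice of $u$) beats the $-\sqrt N$ from $\log K(N)$, contradicting $\tf(\gb,h_c(\gb))=0$.
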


\begin{proof}
Point $(iii)$ is a simple consequence of the two first point as 
\begin{equation}
\frac 1 N \bbE\left[\bE^{\gb,h_c(\gb),\go}_N\left(\sum_{n=1}^N\delta_N \right) \right]=\frac 1 N\int_0^1 \bbE\left[\bP^{\gb,h_c(\gb),\go}_N\left(\cB^u \right) \right]\dd u.
\end{equation}
Point $(ii)$ is rather easy to prove:
Assume that for $u>c_0$ and for some $\delta>0$ one has

\begin{equation}
\bbP\left[\bP^{\gb,h_c(\gb),\go}_N\left(\cB^u \right)>\delta\right]>\delta,
\end{equation}
for infinitely many $N$.

We note that if 
$$\bP^{\gb,h_c(\gb),\go}_N\left(\cB^u \right)>\delta,$$ then
\begin{equation}
Z^{\gb,h_c(\gb),\go}_N(\cB^u)\ge \delta Z^{\gb,h_c(\gb),\go}_N \ge \delta K (N)e^{\gb\go_N+h_c(\gb)},
\end{equation}
where the last inequality is just obtained by considering renewal trajectories with only one contact.
Hence, for every $h>h_c(\gb)$ we have
\begin{equation}
Z^{\gb,h,\go}_N\ge Z^{\gb,h,\go}_N(\cB^u)\ge \delta e^{Nu(h-h_c)} K(N)e^{\gb\go_N+h_c(\gb)}.
\end{equation}
This implies (as we know that the limit exists and is non-random) that for every $h>h_c(\gb)$
\begin{equation}
\lim_{N\to \infty} \frac{1}{N}\log  Z^{\gb,h,\go}_N \ge u(h-h_c(\gb))
\end{equation}
which contradicts assumption \eqref{defc0} for small $h$.

\medskip

To prove $(i)$ let us assume that
\begin{equation}\label{opc}
\lim_{N\to \infty} \bbE\left[\bP^{\gb,h_c(\gb),\go}_N\left(\cB^\gep  \right) \right]=1,
\end{equation}
(or that it occurs along a subsequence) and derive a contradiction from it.
Set 
\begin{equation}
 f_N(u):=\bbE\left[\bP^{\gb,h_c(\gb),\go}_N\left(\cB^\gep \right)  \ | \ \sum_{x=1}^{N-1} \go_x= u\sqrt{N-1} \right].
\end{equation}
We have 
\begin{equation}
 \bbE\left[\bP^{\gb,h_c(\gb),\go}_N\left(\cB^\gep \right) \right]=\int\frac{1}{\sqrt{2\pi}}\exp\left(-u^2/2\right)f_N(u)\dd u.
\end{equation}
As $f_N(u)$ is an increasing function of $u$ this implies that for all $u\in \bbR$
\begin{equation}
\lim_{N\to\infty} f_N(u)=1.
\end{equation}
Fix $u=-10 \gep^{-1}$ and let $N$ be sufficiently large so that $f_N(u)\ge 3/4$.
Then necessarily 
\begin{equation}\label{sixquatorz}
\bbP\left(  \bP^{\gb,h_c(\gb),\go}_N\left(\cB^\gep \right)\ge 1/2   \  | \ \sum_{x=1}^{N-1} \go_x= u\sqrt{N-1} \right)\ge 1/2.
\end{equation}
Note that $\bP^{\gb,h_c(\gb),\go}_N\left(\cB^\gep \right)\ge 1/2$ implies in particular that 
$$ Z^{\gb,h_c(\gb),\go}_N(\cB^\gep)\ge Z^{\gb,h_c(\gb),\go}_N((\cB^\gep)^c)\ge K(N)e^{\gb\go_N+h_c(\gb)}.$$
And hence \eqref{sixquatorz}
\begin{equation}
\bbP\left( Z^{\gb,h_c(\gb),\go}_N(\cB^\gep)\ge K(N)e^{\gb \go_N+h_c(\gb)}  \  | \ \sum_{x=1}^{N-1} \go_x= u\sqrt{N-1} \right)\ge 1/2.
\end{equation}

From Lemma \ref{gaussianlemma}, replacing $u$ by $v$ in the conditioning is equivalent to replacing $\go_n$ by $\go_n+(v-u)(N-1)^{-1/2}$
for $n\in \{1,\dots,N-1\}$.
Hence for $v\ge u$ we have
\begin{equation}\label{decompo}
\bbP\left( Z^{\gb,h_c(\gb),\go}_N(\cB^\gep)\ge K(N) e^{\gep(v-u) \sqrt{N-1}+\gb\go_N+h_c(\gb)} \  | \ \sum_{x=1}^{N-1} \go_x= v\sqrt{N-1} \right)\ge 1/2.
\end{equation}
This implies that for any $v$ (this is obvious for $v\le u$)
\begin{equation}
\bbP\left( Z^{\gb,h_c(\gb),\go}_N\ge  K(N) e^{\gep(v-u) \sqrt{N-1}+\gb\go_N+h_c(\gb)}  \  | \ \sum_{x=1}^{N-1} \go_x= v\sqrt{N-1} \right)\ge 1/2.
\end{equation}
Hence, 
using the obvious bound $Z^{\gb,h_c(\gb),\go}_N\ge K(N)e^{\gb\go_N+h_c(\gb)}$, one obtains
\begin{equation}
\bbE\left[ \log Z^{\gb,h_c(\gb),\go}_N   \  | \ \sum_{x=1}^{N-1} \go_x= v\sqrt{N-1} \right]\\
\ge 
\log K(N)+h_c(\gb)+\frac{1}{2}\gep(v-u) \sqrt{N-1}.
\end{equation}

Hence integrating over $v$ one obtains (recall the value we have chosen for $u$)
\begin{multline}
\bbE\left[ \log Z^{\gb,h_c(\gb),\go}_N\right]
\ge \log K(N)+h_c(\gb) + \frac{1}{2}\frac{1}{\sqrt{2\pi}}\int \gep(v-u)\sqrt{N-1}e^{-\frac{v^2}{2}}\dd v \\
=\log K(N)+h_c(\gb) - \frac {\gep u } {2} \sqrt{N-1} = \log K(N)+h_c(\gb)+5\sqrt{N-1}>0.
\end{multline}
This contradicts the fact that the free-energy is zero.
\end{proof}

Then we can conclude by exhibiting a finite volume bound similar to those of Lemma \ref{finitevol} for the 
free energy derivative.

\begin{lemma}\label{finitevol2}
For $K$ log-convex, for any $N$ and $h$
\begin{equation}
\frac{1}{N}\bbE\left[\bE^{\gb,h,\go}_N\left(\sum_{n=1}^N\delta_N \right) \right]\ge \partial_h\tf(\gb,h).
\end{equation}
\end{lemma}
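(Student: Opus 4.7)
My plan is to combine the FKG inequality just established (valid here because $K$ is $\log$-convex) with a decoupling argument, showing that the finite-volume contact density always dominates its infinite-volume counterpart $\partial_h\tf$.

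Fix $N$, $\gb$ and $h$, and for $M \in \bbN$ consider a system of size $MN$ together with the increasing event
\[
\cK_M := \bigcap_{k=1}^{M-1}\{kN \in \tau\}.
\]
Since $\sum_{n=1}^{MN}\delta_n$ is an increasing function of $\tau$, FKG gives
\[
\bE^{\gb,h,\go}_{MN}\Bigl[\sum_{n=1}^{MN}\delta_n\Bigr] \le \bE^{\gb,h,\go}_{MN}\Bigl[\sum_{n=1}^{MN}\delta_n \,\Big|\, \cK_M\Bigr].
\]
Conditional on $\cK_M$, the renewal measure factorizes into $M$ independent pieces on the blocks $[kN,(k+1)N]$, each having the law of a pinned renewal of length $N$ in the environment $\theta^{kN}\go$, so the right-hand side equals $\sum_{k=0}^{M-1}\bE^{\gb,h,\theta^{kN}\go}_N[\sum_{m=1}^N\delta_m]$. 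Taking $\bbE$, dividing by $MN$, and using stationarity of $\bbP$ yields
\[
\frac{1}{MN}\bbE\Bigl[\bE^{\gb,h,\go}_{MN}\bigl[\sum_{n=1}^{MN}\delta_n\bigr]\Bigr] \le \frac{1}{N}\bbE\Bigl[\bE^{\gb,h,\go}_N\bigl[\sum_{m=1}^N\delta_m\bigr]\Bigr].
\]

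To conclude, I would let $M\to\infty$ in the LHS. It is the $h$-derivative of the convex function $g_M(h) := \frac{1}{MN}\bbE[\log Z^\go_{MN}]$, which converges pointwise to $\tf(\gb,h)$. By the classical fact that derivatives of pointwise-convergent convex functions converge to the derivative of the limit wherever the latter exists, together with $C^\infty$ smoothness of $\tf(\gb,\cdot)$ on $(h_c(\gb),\infty)$ (cited earlier from \cite{cf:Alea}), the LHS tends to $\partial_h\tf(\gb,h)$ for $h>h_c(\gb)$. For $h<h_c(\gb)$ the inequality is trivial since $\partial_h\tf = 0$, and the borderline case $h=h_c(\gb)$ (the one actually needed in the proof of Theorem \ref{mainres3}) follows by letting $h\searrow h_c(\gb)$ on both sides, interpreting $\partial_h\tf$ as the right-derivative $c_0$ from \eqref{defc0}.

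The main obstacle I expect is a careful bookkeeping of the boundary contact energies to verify the factorization on $\cK_M$ (the $e^{\gb\go_{kN}+h}$ factors at block endpoints must be assigned consistently with definition \eqref{znab} of $Z^\go_{[a,b]}$). The interchange of derivative and limit in the final step is the second delicate point, but is a standard convex-analytic result.
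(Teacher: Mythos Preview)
Your proof is correct and follows the same approach as the paper: apply the FKG inequality to the increasing event $\cK_M=\{kN\in\tau,\ k=1,\dots,M-1\}$, factorize the conditioned measure over blocks, average over $\go$ using stationarity, and send $M\to\infty$. You are in fact more careful than the paper about the final limit (invoking convergence of derivatives of convex functions and the smoothness result from \cite{cf:Alea}) and about the $h=h_c(\gb)$ case, neither of which the paper spells out; your bookkeeping concern about the endpoint factors $e^{\gb\go_{kN}+h}$ is legitimate but, as you suspect, resolves cleanly since each $\delta_{kN}$ is counted exactly once when splitting $\sum_{n=1}^{MN}\delta_n$ into block sums.
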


\begin{proof}
This is a simple consequence of the FKG inequality, as the number of contacts is an increasing function.
For $M\ge 1$ on has 
\begin{multline}
\bE^{\gb,h,\go}_{MN}\left[\sum_{n=1}^{NM}\delta_{n} \right] 
\ge 
\bE^{\gb,h,\go}_{MN}\left[\sum_{n=1}^{NM}\delta_{n} \ | \ \delta_{iN}=1, \  \forall i\in \{1,\dots, M-1\}\right]
\\ = \sum_{i=0}^{M-1} \bE^{\gb,h,\theta^{iN} \go}_{N}\left[\sum_{n=1}^{N}\delta_{n}\right],
\end{multline}
and hence taking the average 
\begin{equation}
\frac{1}{NM}\bbE\left[\bE^{\gb,h,\go}_{MN}\left[\sum_{n=1}^{NM}\delta_{n} \right] \right]\le \frac{1}{N}\bbE\left[\bE^{\gb,h,\go}_{N}\left[\sum_{n=1}^{N}\delta_{n} \right]\right].
\end{equation}
The result follows by taking $M$ to infinity.
\end{proof}

\begin{proof}[Proof of Theorem \ref{mainres3}]

For a fixed $N$, 
$$ h\mapsto  \frac{1}{N}\bbE\left[\bE^{\gb,h,\go}_{N}\left[\sum_{n=1}^{N}\delta_{n} \right]\right]. $$
is a continuous function. Hence from \eqref{criticool} one can find $N$ sufficiently large and $h>h_c$ such that 
\begin{equation}
 \frac{1}{N}\bbE\left[\bE^{\gb,h,\go}_{N}\left[\sum_{n=1}^{N}\delta_{n} \right]\right]< c_0.
 \end{equation}
 By Lemma \ref{finitevol2}, this implies that $\partial_h \tf (\gb,h)<c_0$ which yields a contradiction.
 Hence one must have a smooth transition.

\end{proof}

\begin{remark}
In fact the proof in this section yields a non trivial result for  $\zeta<1/2$: when $K$ is $\log$-convex one has  
\begin{equation}
 \lim_{N\to \infty} \frac{1}{N}\bbE\left[\bE^{\gb,h_c(\gb),\go}_{N}\left[\sum_{n=1}^{N}\delta_{n} \right]\right]=\lim_{h\to h_c(\gb)+} \partial_h\tf(\gb,h).
\end{equation}
In other words the contact fraction at the critical point is equal to the right-derivative of the free-energy.
\end{remark}

{\bf Acknowledgement:} The author is grateful to G. Giacomin for enlightening discussion on the subject and to N. Torri for providing access to reference \cite{cf:T14}.

\end{document}